%
%
%
%
%
%
%
\documentclass[%
 reprint,prl,
 amsmath,amssymb,
 aps,
]{revtex4-2}

\usepackage{graphicx}
\usepackage{dcolumn}
\usepackage{bm}
\usepackage{amsthm}

\newtheorem{theorem}{Theorem}
\newtheorem{corollary}{Corollary}

\begin{document}


\title{All Or Nothing: No-Downfolding Theorems For Quantum Simulation}

\author{Troy Van Voorhis}
 \affiliation{Department of Chemistry, Massachusetts Institute of Technology, 77 Massachusetts Ave. Cambridge, MA 02139.}
 \email{tvan@mit.edu}

\date{\today}

\begin{abstract}
The physics of a quantum system with many degrees of freedom is often approximated by downfolding: most of the degrees of freedom are "folded into" a much smaller number of degrees of freedom, resulting in an effective Hamiltonian that still captures the essential physics. Approaches of this sort are particularly relevant for quantum information, where exact downfolding would allow eigenstates in a large Hilbert space to be simulated with fewer qubits. Very little work has been done to prove the existence of such an exact downfolding for general systems or even particular cases. In this letter we prove that exact quantum downfolding is impossible for what is perhaps the most commonly-used formulation of the task. Specifically, for a non-trivial Hamiltonian $\mathbf{H}(x)\equiv\mathbf{A}+x\mathbf{B}$ that depends on some parameter $x$ (e.g. an electric field, bond length or interaction strength) it is not possible to construct a lower-dimension effective Hamiltonian $\mathbf{h}(x)\equiv\mathbf{a}+x\mathbf{b}$ that exactly recovers \emph{any} of the 
eigenvalue functions $E_i(x)$ of $\mathbf{H}(x)$. We discuss several generalizations of this result and the impacts of these findings on future directions in quantum information.

\end{abstract}

\maketitle


The most recent resource estimates suggest that the number of logical qubits required for quantum simulation of numerous quantum field theories, strongly correlated molecules and solids and many body problems exceeds the capabilities of existing quantum hardware \cite{GonthierRomero22,DelgadoMiguel22,ClintonSheridan24,BluvsteinLukin24}. 
This observation has led to significant innovations in quantum downfolding \cite{BauerTroyer16,Kowalski21,HuangEvangelista23,GuntherChristandl24,AlvertisTubman25}, in which a large number of degrees of freedom are folded into a much smaller space - potentially allowing larger systems to be treated with near-term hardware\cite{PengWu20}.
Techniques of this type have a long history in physics. Effective Hamiltonians \cite{Transcorrelated,ZhangRice88,JamesJerke07,LiuZhang10,LiuYao11,BravyiLoss11,GoldmanDalibard14,ChangWagner24}, multireference models \cite{CASPT,MRMP,NEVPT,RoosMerchan96,MRCI}, coarse-graining \cite{GellmannHartle93,HolzheyWilczek94} and the decimation process of renormalization group theory \cite{Wilson75,GlazekWilson94,DMRG,Vidal07} all distill the essential physics of large, complex systems to a smaller number of degrees of freedom. 
In all cases, the assumption is that if one chooses the correct degrees of freedom, the downfolding will be qualitatively correct and well-behaved.

In this letter we consider the question of whether there is an exact quantum downfolding. That is, we consider the question of whether a qualitatively correct downfolding calculation can be considered an approximation to some (possibly unknown) reduced dimension model that agrees exactly with the full system. This question is of obvious importance to computing. It will not be possible to construct a set of controlled approximations that converge to the exact answer if there is no exact answer toward which one can converge.


In one context, exact quantum downfolding is obviously possible. Given a Hamiltonian, $\mathbf{H}$, in an $N$ dimensional Hilbert space, one can trivially construct an $M<N$ dimensional $\mathbf{h}$ that matches $M$ eigenvalues of $\mathbf{H}$: find the eigenvalues of $\mathbf{H}$ that are of interest ($E_i$) and then choose $\mathbf{h}$ to be a diagonal matrix with those eigenvalues. 

However, in many applications, one is not only interested in energies of a single Hamiltonian but rather the variation of those energies as a function of some parameter - magnetic field,  the distance between two planes, an interaction strength, etc. To capture this, consider the case where the dependence on the parameter, $x$, is linear, so that:
\begin{equation}
\mathbf{H}(x)\equiv\mathbf{A}+x\mathbf{B}    
\label{eq:Hx}
\end{equation}
for Hermitian matrices $\mathbf{A},\mathbf{B}$. Downfolding then tries to recover some of the eigenvalues of $\mathbf{H}(x)$ from a smaller matrix $\mathbf{h}(x)$.  As we will see, even this simple objective exhibits the limitations of quantum downfolding.

The eigenvalues, $E_i(x)$ of $\mathbf{H}(x)$ are the roots of the characteristic polynomial:
\begin{equation}
p_{\mathbf H} (E,x)\equiv |\mathbf{H}(x)-E \mathbf{1}|=0   
\label{eq:characteristic_polynomial}
\end{equation}
so that the eigenvalues are, by construction, algebraic functions of $x$.
One of the peculiar features of algebraic functions is that they can be multiple-valued. In practice, that means that what we think of as two distinct eigenvalue functions, $E_i(x)$ and $E_j(x)$, may in fact be two different branches of a single function.
For example, in the simple $2\times 2$ case:
\begin{equation}\begin{split}
\mathbf{H}(x)\equiv
   \left( {\begin{array}{cc}
   \epsilon & V \\
   V & -\epsilon \\
  \end{array} } \right) + x
     \left( {\begin{array}{cc}
   -\epsilon & 0 \\
   0 & \epsilon \\
  \end{array} } \right)  \\
  ~~~~~\rightarrow~~~~~
  E_{\pm}(x)=\pm \sqrt{4\epsilon^2 (1-x)^2 +V^2}
  \end{split}
  \label{eqn:Epm}
\end{equation}
it is clear that the two eigenvalues are two different sign conventions for defining the same square root function.
It is possible to demonstrate that these are a single function by analytically continuing $\mathbf{H}(x)$ to complex values of $x$. These Hamiltonians are not physically realizable (they are not, for example, Hermitian  or $\mathcal{PT}$-symmetric and have complex eigenvalues \cite{Bender07,Moiseyev11}). However, because $E_i(x)$ is an algebraic function the analytic continuation is locally well defined away from any degenerate points ($E_i(x)=E_j(x)$). In the present $2\times 2$ case, Figure~\ref{fig:Analytic_Continuation} shows $E_+(x(t))$ when $x$ moves along the path $x(t)=\sqrt{2}i(1-e^{it})$.
$x(t)$ forms a loop in the complex plane that begins and ends at the base $x=0$. The different eigenvalues at the base are termed the fiber of the multi-valued function. When traced continuously along this loop, $E_{+}(x(t=0))\rightarrow E_{+}(x=0)$ but $E_{+}(x(t=2\pi))\rightarrow E_{-}(x=0)$. If we follow $E_-(x(t))$ along the same path, we find $E_{-}(x(t=2\pi))=E_{+}(x=0)$. That is, the two eigenvalues are interchanged by moving along this path, substantiating our claim that these are, in fact, two branches of the same function.

\begin{figure}
\includegraphics[width=90mm]{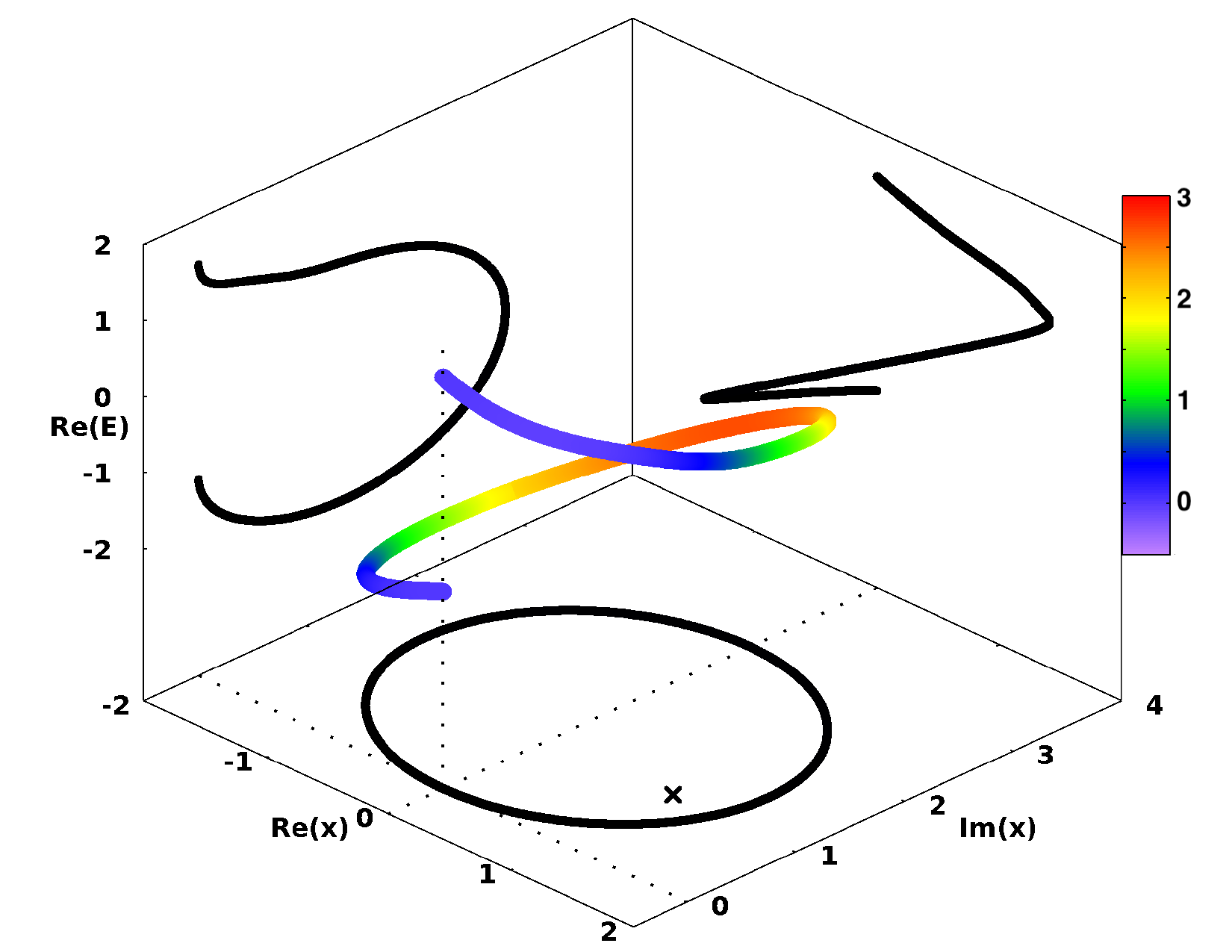}
\caption{\label{fig:Analytic_Continuation} Monodromy of eigenvalue $E_+$ from Eq.~\ref{eqn:Epm} traced along a path $x(t)$ in the complex plane, with color depicting $Im(E)$ ($\epsilon=V=1$). As $x(t)$ winds around a branch point (marked with an "X") and returns to the starting point, the energy changes from the upper eigenvalue, $E_+$, to the lower eigenvalue, $E_-$.}
\end{figure}

The process by which roots, $E_i(x)$, of the characteristic polynomial $p_{\mathbf H}(E,x)$ are shuffled when moving through loops in the complex plane is called monodromy. If $p_{\mathbf H}$ does not factorize into a product of lower order polynomials (i.e. $p_{\mathbf H}(E,x)\ne q(E,x) r(E,x)$) we say it is irreducible and we have the following important theorem:

\begin{theorem}
If the polynomial $p_{\mathbf H}(E,x)$ is irreducible then every root $E_i(x)$ of $p_{\mathbf H}$ can be connected to every other root $E_j(x)$ via monodromy.
\label{thm:Monodromy}
\end{theorem}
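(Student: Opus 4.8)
The plan is to prove the contrapositive: if the roots fail to be connected by monodromy, then $p_{\mathbf H}(E,x)$ is reducible. First I would make the covering-space picture precise. Let $\Delta\subset\mathbb{C}$ be the finite set of branch points, i.e.\ the values of $x$ at which $p_{\mathbf H}$ has a repeated root in $E$, which are exactly the zeros of the discriminant $\mathrm{Res}_E(p_{\mathbf H},\partial_E p_{\mathbf H})$. Over the punctured base $\mathbb{C}\setminus\Delta$ the implicit function theorem guarantees that the $n\equiv\deg_E p_{\mathbf H}$ roots vary holomorphically and stay distinct, so the solution set of $p_{\mathbf H}(E,x)=0$ is an unbranched $n$-sheeted cover of $\mathbb{C}\setminus\Delta$. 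Fixing a base point $x_0$, monodromy is precisely the action of $\pi_1(\mathbb{C}\setminus\Delta,x_0)$ on the fiber $\{E_1(x_0),\dots,E_n(x_0)\}$ obtained by analytically continuing each root around a loop, exactly as illustrated for the $2\times 2$ example.

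Next I would convert "not connected by monodromy" into a statement about symmetric functions. Suppose the monodromy action is \emph{not} transitive, so the fiber breaks into at least two orbits. Choose one orbit, a proper subset $S$ with $1\le|S|<n$, and form the partial product $q(E,x)\equiv\prod_{i\in S}\bigl(E-E_i(x)\bigr)$. Since $S$ is by construction invariant under every loop, the coefficients of $q$ as a polynomial in $E$ are the elementary symmetric functions of $\{E_i(x)\}_{i\in S}$ and therefore return to themselves after any analytic continuation; they are thus \emph{single-valued} holomorphic functions of $x$ on $\mathbb{C}\setminus\Delta$.

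The crux, and the step I expect to be the main obstacle, is to upgrade these single-valued functions to genuine polynomials in $x$ so that $q$ becomes an actual factor. Here I would exploit that $p_{\mathbf H}$ is monic in $E$ up to the sign $(-1)^n$, being the characteristic polynomial of $\mathbf A+x\mathbf B$ with entries polynomial in $x$. Monicity forces the roots to remain bounded as $x$ approaches any finite branch point $x_*\in\Delta$, so the monodromy-invariant symmetric functions are bounded near each $x_*$; being single-valued with isolated bounded singularities, they extend holomorphically across all of $\Delta$ by Riemann's removable-singularity theorem and hence define entire functions of $x$. Because the eigenvalues of $\mathbf A+x\mathbf B$ grow at most linearly as $x\to\infty$, these entire functions have polynomial growth and are therefore polynomials. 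Consequently $q(E,x)$ is an honest polynomial factor of $p_{\mathbf H}(E,x)$ with $0<\deg_E q<\deg_E p_{\mathbf H}$, contradicting irreducibility. Equivalently, one may phrase the whole argument through the smooth model of the curve $p_{\mathbf H}(E,x)=0$: irreducibility of $p_{\mathbf H}$ is equivalent to connectedness of that curve, connectedness of its associated cover of $\mathbb{C}\setminus\Delta$ is equivalent to transitivity of the $\pi_1$-action, and the standard dictionary between connected covers and transitive monodromy then gives the theorem directly.
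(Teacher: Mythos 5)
Your proposal is correct and rests on the same covering-space dictionary the paper invokes (irreducible $\Leftrightarrow$ connected cover away from the discriminant locus $\Leftrightarrow$ transitive monodromy action); the paper simply cites textbooks for this equivalence with a one-line sketch, whereas you prove it in full via the contrapositive. The steps you supply beyond the paper's sketch --- that the monodromy-invariant elementary symmetric functions of a proper orbit $S$ are single-valued on $\mathbb{C}\setminus\Delta$, extend across the branch points by Riemann's removable-singularity theorem because $p_{\mathbf H}$ is monic in $E$, and are polynomials by their growth at infinity, so that $q(E,x)=\prod_{i\in S}\bigl(E-E_i(x)\bigr)$ is an honest nontrivial factor contradicting irreducibility --- are exactly the content of the cited textbook result, and your argument is sound.
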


\begin{proof}
This theorem is proven in various textbooks \cite{Artin,Miranda, Shafarevich}. The basic structure of the proof is as follows: If the polynomial is irreducible, the associated Riemann surface of the root function $E(x)$ is a connected cover of finite degree away from isolated points of degeneracy. 
Under these conditions, the group of monodromy transformations at any base point $x$ is transitive, which means there is at least one fiber $E_i(x)$  that maps to every other $E_j(x)$. Because monodromy is reversible, every root can thus be connected to every other root by monodromy. 
\end{proof}

If all of the eigenvalues $E_i(x)$ can be connected to one another,
then each eigenvalue function in some sense encodes all eigenvalues.
$p_{\mathbf H}(E,x)$ being irreducible would thus have significant consequences for downfolding techniques aimed at identifying only some of the eigenvalues.
Now, it is clear that not every
$\mathbf{H}(x)$ has an irreducible  $p_{\mathbf H}(E,x)$. For example,
if $\mathbf{H}(x)$ is block diagonal:
\begin{eqnarray}
  \mathbf{H}(x) &\equiv&
  \left( {\begin{array}{cc}
            \mathbf{F}(x) & \mathbf{0} \\
            \mathbf{0} & \mathbf{G}(x) \\
          \end{array} } \right)
=  \left( {\begin{array}{cc}
            \mathbf{F}_0 & \mathbf{0} \\
            \mathbf{0} & \mathbf{G}_0 \\
          \end{array} } \right)
       + x
     \left( {\begin{array}{cc}
            \mathbf{F}_1 & \mathbf{0} \\
            \mathbf{0} & \mathbf{G}_1 \\
          \end{array} } \right)  \nonumber\\  
&\rightarrow&~~p_{\mathbf H}(E,x)=p_{\mathbf F}(E,x) p_{\mathbf G}(E,x)
\end{eqnarray} 
This block diagonal structure reproduces what we expect when the
system has a symmetry -
changing the parameter $x$ mixes states within
each block (states that have the same symmetry as one another) but
does not mix states in different blocks (those that have different
symmetry). One can further generalize this case to include matrices
that are similar to a block diagonal matrix 
(i.e. those that can be written as $\mathbf{H}(x)=\mathbf{U}^T(x) \mathbf{\tilde H}(x)
\mathbf{U}(x) $ where $\mathbf{\tilde H}$ is block diagonal and $\mathbf{U}$ is unitary), which
corresponds to the same blocked Hamiltonian in a basis
of states that have mixed symmetry. As it turns out, this is the only situation in which
$p_{\mathbf H}(E,x)$ is reducible:

\begin{theorem}
  If $p_{\mathbf H}(E,x)$ is reducible then $\mathbf{H}(x)$ is similar to a block diagonal matrix.
  \label{thm:irreducible}
 \end{theorem}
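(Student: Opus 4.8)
The plan is to convert the algebraic hypothesis (reducibility of $p_{\mathbf H}$) into a geometric one — an $\mathbf{H}(x)$-invariant orthogonal splitting of the Hilbert space that persists as $x$ varies — and then read off the unitary that realizes it. Write the reducible polynomial as a product of two coprime factors $p_{\mathbf H}(E,x)=q(E,x)\,r(E,x)$ (possible whenever $p_{\mathbf H}$ has at least two distinct irreducible factors; the remaining prime-power case I treat separately below). The roots of $p_{\mathbf H}$ then partition into the roots of $q$ and the roots of $r$. Because monodromy permutes the roots of $p_{\mathbf H}$ while preserving its factorization into irreducibles, a $q$-root can never be carried to an $r$-root; by Theorem~\ref{thm:Monodromy} the eigenvalues are nevertheless freely shuffled \emph{within} each group. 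I would exploit exactly this asymmetry: although the individual eigenvalues and eigenvectors are permuted among themselves along loops, the \emph{set} of $q$-eigenvectors, and hence the subspace they span, is single-valued.

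Concretely, fix a real $x$ outside the finite exceptional set $S$ consisting of the branch points of $p_{\mathbf H}$ together with the isolated values where a $q$-root meets an $r$-root; the latter are the zeros of the resultant $\mathrm{Res}_E(q,r)(x)$, which is not identically zero once $q,r$ are coprime. On $\mathbb{R}\setminus S$ the $q$- and $r$-eigenvalues are disjoint, and I would define the spectral projector
\begin{equation}
\mathbf{P}_q(x)=\frac{1}{2\pi i}\oint_{\Gamma}\big(E\,\mathbf{1}-\mathbf{H}(x)\big)^{-1}\,dE ,
\end{equation}
with the contour $\Gamma$ encircling the $q$-eigenvalues and excluding the $r$-eigenvalues. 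Since $\mathbf{H}(x)$ is Hermitian, $\mathbf{P}_q(x)$ is the \emph{orthogonal} projector onto the $q$-eigenspace: it is Hermitian, it commutes with $\mathbf{H}(x)$, and its image is $\mathbf{H}(x)$-invariant with Hermitian-orthogonal complement $\mathrm{Im}\,\mathbf{P}_r(x)$. The contour representation — equivalently, a Bezout identity $a\,q+b\,r=1$ giving $\mathbf{P}_q=b(\mathbf{H})\,r(\mathbf{H})$ together with Cayley--Hamilton $p_{\mathbf H}(\mathbf{H})=0$ — exhibits $\mathbf{P}_q(x)$ as an analytic matrix-valued function of $x$ on $\mathbb{R}\setminus S$.

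With the invariant splitting in hand, I would choose a smooth orthonormal frame $u_1(x),\dots,u_m(x)$ of $\mathrm{Im}\,\mathbf{P}_q(x)$ (with $m=\deg_E q=\dim\mathrm{Im}\,\mathbf{P}_q$) and complete it to an orthonormal basis of the full space; assembling these columns defines a unitary $\mathbf{U}(x)$. Because $\mathbf{H}(x)$ maps $\mathrm{Im}\,\mathbf{P}_q$ and its orthogonal complement into themselves, $\mathbf{U}^{\dagger}(x)\,\mathbf{H}(x)\,\mathbf{U}(x)$ is block diagonal with blocks of sizes $m$ and $n-m$, which is precisely the unitary similarity the theorem asserts.

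The delicate step — and the one I expect to be the main obstacle — is global consistency: producing a single-valued, smooth $\mathbf{U}(x)$ over the whole parameter range rather than merely locally. Here Theorem~\ref{thm:Monodromy} does the essential work, guaranteeing that $\mathrm{Im}\,\mathbf{P}_q(x)$ is genuinely single-valued even though its spanning eigenvectors are not, so the frame may be continued around loops without ever leaking into the $r$-block. It then remains to check that the block structure extends across $S$: at a pure branch point of $q$ no $q$--$r$ gap is crossed, so $\mathbf{P}_q$ is unaffected and extends analytically, leaving only the genuine $q$--$r$ crossings to be removed by a local argument. Finally I would dispose of the non-generic prime-power case in which $q$ and $r$ share a factor, so that $p_{\mathbf H}$ is not squarefree: there Hermiticity forces the minimal polynomial to remain squarefree, every eigenvalue acquires multiplicity at least two, and the resulting multiplicity (isotypic) structure supplies the invariant orthogonal splitting and hence the same conclusion.
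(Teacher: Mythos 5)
There is a genuine gap, and it sits at the definition of the target object. Your construction produces an $x$-dependent unitary $\mathbf{U}(x)$ such that $\mathbf{U}^{\dagger}(x)\mathbf{H}(x)\mathbf{U}(x)$ is block diagonal \emph{at each} $x$. But with an arbitrary $x$-dependent unitary and no constraint on the form of the blocks, that statement is vacuous: every Hermitian family is pointwise unitarily \emph{diagonalizable}, hence trivially "similar to a block diagonal matrix" in this weak sense, whether or not $p_{\mathbf H}$ is reducible. The content of the theorem -- made explicit by the displayed example preceding it and relied upon in the proof of Theorem~\ref{thm:strict} ("blocks $\mathbf{F}(x)$ and $\mathbf{G}(x)$ that depend linearly on $x$") -- is that the block-diagonal matrix $\mathbf{\tilde H}(x)$ can be taken to be a Hermitian \emph{linear pencil} $\mathbf{F}_0+x\mathbf{F}_1 \oplus \mathbf{G}_0+x\mathbf{G}_1$ whose blocks have characteristic polynomials equal to the factors $q$ and $r$. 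Your spectral-projector route cannot deliver this: the frame $\mathbf{U}(x)$ you build from $\mathrm{Im}\,\mathbf{P}_q(x)$ is an algebraic, non-polynomial function of $x$, so the compressed blocks $\mathbf{U}^{\dagger}(x)\mathbf{H}(x)\mathbf{U}(x)\big|_{\mathrm{Im}\,\mathbf{P}_q}$ are merely algebraic matrix functions, not pencils.

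The missing ingredient is the determinantal-representation theorem the paper cites \cite{Hanselka17}: each factor $q(E,x)$ of $p_{\mathbf H}$ is monic in $E$ (the leading $E$-coefficients of the factors multiply to a constant, hence are constants) and real-rooted for all real $x$ (its roots are a subset of the spectrum of a Hermitian matrix), and every such polynomial is the characteristic polynomial of some symmetric linear pencil $\mathbf{F}_0+x\mathbf{F}_1$. That is the crux of the paper's proof and is entirely absent from yours; it is a deep fact (of Helton--Vinnikov type), not something recoverable from Cayley--Hamilton, Bezout identities, or Kato-style projector continuity. Your projector argument is not wasted -- it correctly establishes that the $q$-invariant subspace is single-valued and varies analytically, which is a nice geometric complement -- but as a proof of Theorem~\ref{thm:irreducible} it proves only the trivial half. (Your closing treatment of the non-squarefree case inherits the same defect: regrouping one vector per degenerate eigenspace again yields blocks with no pencil structure.)
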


 \begin{proof}
   If $p_{\mathbf H}(E,x)=p_{\mathbf F}(E,x) p_{\mathbf G}(E,x)$ then
   $p_{\mathbf F}(E,x) $ is real-rooted of some total degree, $M$.
   Any polynomial of this type can
   be written as the characteristic polynomial of some $M\times M$
   symmetric matrix $\mathbf{F}(x)\equiv\mathbf{F}_0+x \mathbf{F}_1$ \cite{Hanselka17}.
   Applying the same logic to $p_{\mathbf G}(E,x)$ shows that it is
   the characteristic polynomial of some symmetric $(N-M)\times(N-M)$
   matrix $\mathbf{G}(x)\equiv\mathbf{G}_0+x \mathbf{G}_1$. Then
   \begin{equation}
     \mathbf{\tilde H}(x) \equiv
     \left( {
         \begin{array}{cc}
           \mathbf{F}(x) & \mathbf{0} \\
           \mathbf{0} & \mathbf{G}(x) \\
         \end{array} } \right)
   \end{equation}
   satisfies $p_{\mathbf{\tilde H}}(E,x)=p_{\mathbf H}(E,x)$ which
   implies the two polynomial matrices are similar.
 \end{proof}

Reducibility by similarity is relatively trivial -  the characteristic polynomial
is reducible because the entire matrix can be exactly reduced to two
smaller matrices via an innate symmetry.
Unless $\mathbf{H}(x)$ is trivial in this way, $p_{\mathbf
  {H}}(E,x)$ will be irreducible.

Theorems~\ref{thm:Monodromy} and \ref{thm:irreducible} are the central results of this letter. They
lead immediately to the following corollary for quantum downfolding:

\begin{corollary}
No eigenvalue function, $e_i(x)$ of an $M$-dimensional Hamiltonian,
$\mathbf{h}(x)$, can equal any of the eigenvalue functions, $E_j(x)$,
of an $N$-dimensional Hamiltonian $\mathbf{H}(x)$  ($N>M$) except for
the trivial case in which
$\mathbf{H}(x)$ is similar to a block diagonal matrix with one block
of dimension $M$ or smaller.  
\label{cor:NoDownfolding}
\end{corollary}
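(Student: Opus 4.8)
The plan is to reduce Corollary~\ref{cor:NoDownfolding} to a statement about minimal polynomials of algebraic functions, with Theorem~\ref{thm:irreducible} supplying the translation between irreducibility and block structure (Theorem~\ref{thm:Monodromy} furnishes the conceptual motivation, but only Theorem~\ref{thm:irreducible} is needed in the proof). First I would record that each eigenvalue function $E_j(x)$ of $\mathbf{H}(x)$ is algebraic over the field $\mathbb{C}(x)$ of rational functions, and so admits a unique monic minimal polynomial $m_j(E,x)\in\mathbb{C}(x)[E]$; this $m_j$ is irreducible and divides every polynomial in $E$ over $\mathbb{C}(x)$ that annihilates $E_j$. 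Because $p_{\mathbf H}(E,x)$ is monic in $E$ up to sign it is primitive, so by Gauss's lemma its irreducibility in $\mathbb{C}[E,x]$ and over $\mathbb{C}(x)[E]$ coincide, and I may pass freely between the bivariate and the rational-function viewpoints.

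The heart of the argument is a degree count. Suppose an eigenvalue function $e_i(x)$ of $\mathbf{h}(x)$ equals an eigenvalue function $E_j(x)$ of $\mathbf{H}(x)$ as algebraic functions. Then $E_j(x)$ is a common root of $p_{\mathbf H}$ (degree $N$ in $E$) and of $p_{\mathbf h}$ (degree $M$ in $E$), so $m_j$ divides both. If $p_{\mathbf H}$ is irreducible then it \emph{is} the minimal polynomial of each of its roots, so $\deg_E m_j=N$; but a polynomial of degree $N$ cannot divide the nonzero polynomial $p_{\mathbf h}$ of strictly smaller degree $M<N$, a contradiction. Hence an irreducible $p_{\mathbf H}$ forbids every downfolding. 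By the contrapositive of Theorem~\ref{thm:irreducible}, whenever $\mathbf{H}(x)$ is not similar to a block diagonal matrix its characteristic polynomial is irreducible, and the corollary holds in this generic case.

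To treat the exceptional reducible case and read off the block dimension, I would factor $p_{\mathbf H}=\prod_k q_k$ into irreducibles; the shared root $E_j$ is annihilated by one of these factors, $q_{k_0}=m_j$, whose degree in $E$ is $d=\deg_E m_j$. Divisibility $m_j\mid p_{\mathbf h}$ forces $d\le M$. Applying Theorem~\ref{thm:irreducible} to the two-factor split $p_{\mathbf H}=q_{k_0}\cdot\bigl(p_{\mathbf H}/q_{k_0}\bigr)$ then exhibits $\mathbf{H}(x)$ as similar to a block diagonal matrix one of whose blocks, realizing $q_{k_0}$, has dimension exactly $d\le M$---precisely the stated exception.

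The main obstacle I anticipate is not the minimal polynomial algebra, which is routine, but reconciling the complex-analytic notion of irreducibility behind the factorization with the real-symmetric realization that Theorem~\ref{thm:irreducible} requires. The point to verify is that every irreducible factor $q_k$ of $p_{\mathbf H}$ over $\mathbb{C}(x)$ in fact has real coefficients: because $\mathbf{H}(x)$ is Hermitian at real $x$, the polynomial $p_{\mathbf H}$ is real-rooted, and a real-rooted factor cannot split off a genuine complex-conjugate pair---the two conjugate factors would share the same real roots and hence coincide---so each $q_k$ is real and real-rooted, exactly the input the Hanselka construction inside Theorem~\ref{thm:irreducible} demands. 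Granting this, the remaining bookkeeping is straightforward: the two-factor split guarantees that $q_{k_0}$ is realized as a block of its own degree $d\le M$ rather than being diluted into a larger one, which closes the argument.
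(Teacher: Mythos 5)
Your argument is correct, and it reaches the stated conclusion by a genuinely different route from the paper. The paper's proof runs through Theorem~\ref{thm:Monodromy}: if $\mathbf{H}(x)$ is not similar to a block diagonal matrix with a block of dimension $\le M$, every irreducible factor of $p_{\mathbf H}$ has degree at least $M+1$, so any $E_j(x)$ is connected by monodromy to at least $M+1$ eigenvalue functions, while any $e_i(x)$ can reach at most $M$; unequal monodromy orbit sizes force $e_i \ne E_j$. You instead use the degree of the minimal polynomial of the algebraic function as the distinguishing invariant: if $e_i = E_j$, then $m_j$ divides both characteristic polynomials, so the irreducible factor of $p_{\mathbf H}$ containing $E_j$ has degree $d \le M$, and Theorem~\ref{thm:irreducible} applied to the split $p_{\mathbf H} = q_{k_0}\cdot(p_{\mathbf H}/q_{k_0})$ produces the exceptional block of dimension $d \le M$. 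Since the monodromy orbit of a root of an irreducible factor is exactly the root set of that factor, the two invariants coincide; but your version is purely algebraic, bypasses Theorem~\ref{thm:Monodromy} and all of its analytic scaffolding (connected covers, base points, branch loci), and makes the ``block of dimension $M$ or smaller'' bookkeeping explicit rather than implicit. Two things you do that the paper does not, both to your credit: the Gauss's-lemma remark legitimizing the passage between $\mathbb{C}[E,x]$ and $\mathbb{C}(x)[E]$, and the verification that each irreducible factor of the real-rooted $p_{\mathbf H}$ can be taken with real coefficients (an irreducible factor and its conjugate share a root as algebraic functions, hence are associates), which is genuinely needed before invoking the Hanselka construction inside Theorem~\ref{thm:irreducible} and is glossed over in the paper. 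What the paper's route buys in exchange is continuity with the letter's physical narrative: the monodromy count is the mechanism the rest of the discussion leans on.
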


\begin{proof}
If $\mathbf{H}(x)$ is not similar to a block matrix of the stated form
then every irreducible factor of $p_{\mathbf{H}}$ has degree at least 
$M+1$. That means that every eigenvalue function $E_j(x)$ can be
connected to at least $M+1$ eigenvalue functions by monodromy. By construction, every $e_i(x)$ can be connected to at most $M$ distinct eigenvalues by monodromy. Therefore the two functions cannot be equal.
\end{proof}

It is somewhat difficult to envision physical scenarios in which $p_{\mathbf H}$ is reducible and yet the $\mathbf{U}$ that brings it to block diagonal form  depends meaningfully on $x$. 
If $\mathbf{U}$ must depend on $x$, that would imply that while $\mathbf{H}(x)$ has an intrinsic symmetry, the symmetry character of the basis cannot be chosen to match the symmetry of $\mathbf{H}(x)$ - which would be unusual.
It is much easier to understand the situation in which $\mathbf{U}$ can be chosen constant - a situation we call \emph{strict} similarity.
Categorizing pairs of matrices $(\mathbf{A},\mathbf{B})$ up to strict
similarity is a wild problem \cite{Nazarova74,Drozd77} and thus strict similarity can be established only under certain conditions
\cite{Kippenhahn51,MotzkinTaussky53,Albert1958,Shapiro1,Shapiro2,Shapiro3,Friedland1983, LeBruyn1997,Shavarovskii04}.
In this context, it is worth noting that in the most common downfolding scenario one is interested in obtaining the
lowest few eigenvalues, in which case strict similarity is required: 

\begin{theorem}
  If $p_{\mathbf H}(E,x)= p_{\mathbf F}(E,x) p_{\mathbf G}(E,x)$ is
  reducible and all the roots of $p_{\mathbf F}$ are strictly less than the
  roots of $p_{\mathbf G}$ for some value of $x$ then $\mathbf{H}(x)$ is strictly
  similar to a block diagonal matrix.
  \label{thm:strict}
 \end{theorem}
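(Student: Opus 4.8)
The plan is to produce, from the strict ordering at $x_0$, a single $x$-independent subspace that reduces the whole pencil. Strict similarity to a block-diagonal matrix is equivalent to the existence of a constant subspace $V$ that is invariant under both $\mathbf{A}$ and $\mathbf{B}$: in an orthonormal basis adapted to $V\oplus V^{\perp}$ (the complement is automatically invariant since $\mathbf{A},\mathbf{B}$ are symmetric) both $\mathbf{A}$ and $\mathbf{B}$, hence $\mathbf{H}(x)=\mathbf{A}+x\mathbf{B}$, are simultaneously block diagonal with a constant $\mathbf{U}$. So it suffices to exhibit such a $V$ with characteristic polynomial $p_{\mathbf F}$. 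I would take $V(x)$ to be the range of the spectral projector onto the $\mathbf F$-roots of $\mathbf{H}(x)$ and show that this projector is independent of $x$.

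First I would record two consequences of the hypotheses. The strict inequality at $x_0$ forces $p_{\mathbf F}$ and $p_{\mathbf G}$ to be coprime over $\mathbb{C}(x)$ (a shared root would have to lie strictly below itself), so by the Euclidean algorithm in $\mathbb{C}(x)[E]$ there exist $a,b$ with $a\,p_{\mathbf F}+b\,p_{\mathbf G}=1$, and $\mathbf{P}(x)\equiv b(\mathbf{H}(x),x)\,p_{\mathbf G}(\mathbf{H}(x),x)$ is a well-defined rational matrix function projecting onto $\ker p_{\mathbf F}(\mathbf{H}(x))$; its only possible poles are the finitely many $x$ where $\mathrm{Res}_E(p_{\mathbf F},p_{\mathbf G})$ vanishes, i.e. where an $\mathbf F$-root meets a $\mathbf G$-root. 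Second, because the factorization $p_{\mathbf F}p_{\mathbf G}$ groups the irreducible factors of $p_{\mathbf H}$, Theorem~\ref{thm:Monodromy} shows that monodromy never carries an $\mathbf F$-root to a $\mathbf G$-root; transporting the $\mathbf F$-eigenvectors around any loop therefore returns their span to itself, so $\mathbf{P}(x)$ is single valued, and being single valued and algebraic it is genuinely rational.

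Next I would invoke Hermiticity on the real axis. For real $x$ the matrix $\mathbf{H}(x)$ is symmetric and $\ker p_{\mathbf F}(\mathbf{H})$ and $\ker p_{\mathbf G}(\mathbf{H})$ are orthogonal, so $\mathbf{P}(x)$ is the orthogonal spectral projector onto the $\mathbf F$-eigenspace; by Rellich's theorem the eigenvalues and eigenvectors are real analytic through the (necessarily semisimple) real crossings, so $\mathbf{P}(x)$ extends real-analytically across every real $x$, is bounded there, and has constant trace $M$. A rational matrix function with no real poles that is bounded on all of $\mathbb{R}$ has numerator degree at most denominator degree entrywise; if in addition it has no \emph{complex} poles it is a polynomial, and a polynomial bounded on $\mathbb{R}$ is constant. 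The whole theorem thus reduces to one statement: under the gap hypothesis $\mathbf{P}(x)$ has no complex poles, equivalently every coalescence of an $\mathbf F$-root with a $\mathbf G$-root is semisimple.

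This last step is the crux and the main obstacle. A pole at complex $x_j$ must have a residue $\mathbf{R}$ that is symmetric (from $\mathbf{P}^{T}=\mathbf{P}$), nilpotent (from $\mathbf{P}^{2}=\mathbf{P}$), and commutes with $\mathbf{H}(x_j)$ (from $[\mathbf{P},\mathbf{H}]=0$), i.e. a genuine Jordan coupling of the colliding $\mathbf F$- and $\mathbf G$-branches; single-valuedness already excludes the branching (square-root) coalescences, leaving only a defective analytic collision in which the symmetric traceless part of the local $2\times2$ model $\begin{pmatrix} r & q\\ q & -r\end{pmatrix}$ becomes a nonzero null vector, $r(x_j)=\pm i\,q(x_j)\neq0$. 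In the exactly solvable $2\times2$ reduction reducibility already forces $r,q$ proportional and hence $r(x_j)=q(x_j)=0$, so the collision is a harmless crossing; the work is to show the same in the embedded setting, where the two-dimensional space spanned by the colliding branches is not itself an affine pencil and reducibility alone is known not to suffice. I expect the hard part to be controlling this non-affine embedding and pinpointing exactly where the strict real ordering at $x_0$ — rather than mere reducibility — enters to kill the residual null coalescences; this is where the affine (degree-one) dependence on $x$ and the reality of $\mathbf{A},\mathbf{B}$ must be used together globally.
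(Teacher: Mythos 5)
Your proposal has a genuine gap, and you name it yourself: the entire theorem reduces, in your framing, to showing that the spectral projector $\mathbf{P}(x)$ onto $\ker p_{\mathbf F}(\mathbf{H}(x))$ has no poles at complex $x$ (equivalently, that every coalescence of an $\mathbf F$-root with a $\mathbf G$-root off the real axis is semisimple), and you do not prove this. Everything before that point --- coprimality of $p_{\mathbf F}$ and $p_{\mathbf G}$, single-valuedness of $\mathbf{P}$ from the monodromy separation, boundedness and analyticity on the real axis, and ``rational, bounded, pole-free $\Rightarrow$ constant'' --- is sound, but it only establishes that the obstruction lives at complex defective collisions; it does not remove it. Crucially, the ordering hypothesis enters your argument only to give coprimality, which is far weaker than what the theorem asserts (coprimality alone does not imply strict similarity), so the hypothesis that actually carries the theorem is never brought to bear on the step that needs it.

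The paper closes exactly this gap by a much more local and elementary argument, and it is worth seeing where the gap condition does its work. Starting from Theorem~\ref{thm:irreducible}, $\mathbf{H}(x)$ is similar to a block diagonal $\mathbf{\tilde H}(x)$ with linear blocks $\mathbf{F}(x),\mathbf{G}(x)$. Diagonalizing both at $x_0$ and comparing $\sum_{i\in F}E_i^F(x_0+\delta x)$, similarity forces the two partial traces to agree order by order in $\delta x$; for $\mathbf{\tilde H}$ the sum is exactly linear, while for $\mathbf{H}$ second-order perturbation theory contributes $\delta x^2\sum_{i\in F,\,j\in G}|B_{ij}|^2/(E_i^F(x_0)-E_j^G(x_0))$. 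Because every denominator is negative by the ordering hypothesis, the vanishing of this sum forces every $B_{ij}=0$ for $i\in F$, $j\in G$, i.e.\ the eigenbasis of $\mathbf{H}(x_0)$ already block-diagonalizes $\mathbf{B}$ (and $\mathbf{A}$), giving strict similarity directly. This sign-definiteness of the energy denominators is the content of the gap hypothesis that your complex-analytic route still needs and does not supply; without an analogous positivity input, the ``defective analytic collision'' you describe cannot be excluded, and indeed mere reducibility (your coprimality) is known not to exclude it.
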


 \begin{proof}
   We know from Theorem~\ref{thm:irreducible} that $\mathbf{H}(x)$ is
   similar to a block diagonal matrix $\mathbf{\tilde{H}}(x)$ with
   blocks $\mathbf{F}(x)$ and $\mathbf{G}(x)$ that depend linearly on $x$. Call
   the roots of $p_{\mathbf F}$ ($p_{\mathbf G}$) $E_i^F(x)$
   ($E_i^G(x)$ and suppose that at $x=x_0$ we have that
   $E_i^F(x_0)<E_j^G(x_0) ~\forall~ i,j$. Perform unitary
   transformations on  $\mathbf{H}(x)$ and $\mathbf{\tilde{H}}(x)$ so
   that both are diagonal with the eigenvalues in ascending order at
   $x_0$. Then, examine $\sum_i E_i^F(x_0+\delta
   x)$ for both $\mathbf{H}$ and $\mathbf{\tilde{H}}$. We know
   these two sums must be the same because the matrices are similar.
   A simple calculation shows that for $\mathbf{\tilde{H}}(x)$ we
   have:  $\sum_i E_i^F(x_0+\delta
   x)= Tr(\mathbf{F}_0)+\delta x
   Tr(\mathbf{F}_1)$. That is, the sum is linear in $\delta x$. For
   $\mathbf{H}(x)$ perturbation theory gives:
   \begin{eqnarray*}
     \sum_i E_i^F(x_0+\delta x)= \sum_{i\in F} E_i^F(x_0)+\delta x
     \sum_{i\in F} B_{ii} \\ 
     + \delta x^2 \sum_{i\in F,j\in G}
     \frac{|B_{ij}|^2}{E_i^F(x_0)-E_j^G(x_0)}+...
   \end{eqnarray*}
For the two sums to be equal the quadratic term must be zero. Because
$E_i^F(x_0)<E_j^G(x_0) ~\forall~ i,j$ each term in the quadratic sum is
nonpositive. To obtain a sum of zero, then, each term must be
identically zero, from which we conclude that $B_{ij}=0 ~\forall~ i\in
F,j\in G$. That is, the $FG$ block of $\mathbf{H}(x)$ is zero. We thus
conclude that diagonalizing $\mathbf{H}(x)$ at $x_0$ results in a
block diagonal $\mathbf{H}(x)$ for all $x$. That is, $\mathbf{H}(x)$
is strictly similar to a block diagonal matrix.
 \end{proof}

We thus conclude that for the most common downfolding scenario it is indeed the familiar situation of strict similarity (and not just similarity) that is required:
\begin{corollary}
The eigenvalue functions, $e_i(x)$ of an $M$-dimensional Hamiltonian,
$\mathbf{h}(x)$, cannot reproduce the $M$ lowest eigenvalue functions, $E_j(x)$,
of an $N$-dimensional Hamiltonian $\mathbf{H}(x)$  ($N>M$) except for
the trivial case in which
$\mathbf{H}(x)$ is strictly similar to a block diagonal matrix with one block
equal to $\mathbf{h}(x)$.  
\label{cor:NoRenormalization}
\end{corollary}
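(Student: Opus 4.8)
The plan is to argue by the contrapositive: assuming the $M$ eigenvalue functions $e_i(x)$ of $\mathbf{h}(x)$ do coincide with the $M$ lowest eigenvalue functions of $\mathbf{H}(x)$, I would show that the ``trivial'' exceptional case must hold. First I would record the purely algebraic consequence of the hypothesis. Since each $e_i(x)$ equals one of the eigenvalues of $\mathbf{H}(x)$, the degree-$M$ polynomial $p_{\mathbf h}(E,x)=\prod_{i=1}^M\bigl(E-e_i(x)\bigr)$ divides $p_{\mathbf H}(E,x)$: both are monic in $E$, so dividing $p_{\mathbf H}$ by $p_{\mathbf h}$ in $\mathbb{R}[x][E]$ leaves a remainder of $E$-degree below $M$ that vanishes at every real $x$ and is therefore identically zero, making the quotient $p_{\mathbf G}\equiv p_{\mathbf H}/p_{\mathbf h}$ a genuine polynomial. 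Thus $p_{\mathbf H}=p_{\mathbf F}\,p_{\mathbf G}$ is reducible, with the degree-$M$ factor $p_{\mathbf F}\equiv p_{\mathbf h}$ collecting exactly the $M$ lowest branches and $p_{\mathbf G}$ the remaining $N-M$. By Corollary~\ref{cor:NoDownfolding} (equivalently Theorem~\ref{thm:irreducible}) we are therefore already in the block-similar case, and the remaining work is to upgrade ``similar'' to ``strictly similar.''

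That upgrade is exactly what Theorem~\ref{thm:strict} supplies, so the second step is to verify its hypotheses. The roots of $p_{\mathbf F}=p_{\mathbf h}$ are the $M$ lowest eigenvalue functions, so for real $x$ every root of $p_{\mathbf F}$ lies no higher than every root of $p_{\mathbf G}$; since the two factors are coprime, these orderings are strict away from a discrete set of crossing points, and in particular there is an $x_0$ at which all roots of $p_{\mathbf F}$ are strictly below all roots of $p_{\mathbf G}$. This is precisely the separation hypothesis of Theorem~\ref{thm:strict}, which then yields a \emph{constant} unitary $\mathbf{U}$ with $\mathbf{U}^T\mathbf{H}(x)\mathbf{U}=\mathrm{diag}\bigl(\mathbf{F}(x),\mathbf{G}(x)\bigr)$ for all $x$, where the low-energy block $\mathbf{F}(x)=\mathbf{F}_0+x\mathbf{F}_1$ has characteristic polynomial $p_{\mathbf F}=p_{\mathbf h}$. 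Finally I would identify this block with $\mathbf{h}(x)$ in the sense that it is isospectral to $\mathbf{h}(x)$ for every $x$, reproducing the eigenvalue functions $e_i(x)$, which is all that downfolding concerns.

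The main obstacle I anticipate is this last identification together with the degeneracy bookkeeping of the second step. Asserting that the block is \emph{literally} $\mathbf{h}(x)$, rather than merely isospectral to it for all $x$, would require that two Hermitian pencils with identical characteristic polynomials be related by a single constant congruence; this is false in general (it is the wild classification problem flagged in the text, since the block $\mathbf{F}(x)$ produced by Theorem~\ref{thm:strict} is the fixed restriction of $\mathbf{H}$ and cannot be chosen freely), so the cleanest defensible reading, and the one I would adopt, is equality of eigenvalue functions. The subtler technical point is ensuring the strict separation demanded by Theorem~\ref{thm:strict} genuinely survives at some real $x_0$ even when the lowest block touches the upper spectrum at isolated crossings: one must argue that ``reproducing the $M$ lowest eigenvalue functions'' forbids these $M$ branches from ever strictly exceeding the upper $N-M$, so that strict inequality holds on a dense open set and the hypothesis is met.
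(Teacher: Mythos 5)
Your proposal is correct and follows essentially the same route as the paper, whose proof is literally the one-line ``apply Corollary~\ref{cor:NoDownfolding} and Theorem~\ref{thm:strict}'': you factor $p_{\mathbf H}=p_{\mathbf h}\,p_{\mathbf G}$, invoke the reducibility/similarity result, and then use the low-lying-spectrum separation to upgrade to strict similarity via Theorem~\ref{thm:strict}. The two caveats you flag --- that the block is only guaranteed isospectral to $\mathbf{h}(x)$ rather than literally equal, and that the strict separation hypothesis must be checked at some $x_0$ despite possible crossings --- are genuine points the paper glosses over, and your handling of them is reasonable.
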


\begin{proof}
The proof is a straightforward application of Corollary~\ref{cor:NoDownfolding} and
Theorem~\ref{thm:strict}.
\end{proof}

Note that the above results do not hinge on monodromy being practical. As an analogy, if two functions ($e_i(x)$ and $E_j(x)$) had different formal power series expansions, we could conclude that the two functions are not equal \emph{even if the power series do not converge}. The situation with monodromy is strictly analogous. The analytic continuation implied by monodromy is likely to be unstable and computationally inefficient for many $\mathbf{H}(x)$. The above theorems still demonstrate that these unknown paths are different for $e_i(x)$ and $E_j(x)$ for non-trivial $\mathbf{H}(x)$. Thus, though they may be prohibitively difficult to locate, the existence of distinct monodromy means it is impossible for the two functions to be equal. 

The results above generalize with very little modification to the case of arbitrary polynomial dependence on $x$.
Specifically, generalize $\mathbf{H}(x)$ to a Hermitian $N\times N$ finite-degree polynomial matrix in $x$:
\begin{equation}
\mathbf{H}(x)\equiv\mathbf{A}+\sum_{k=1}^{K} x^k ~\mathbf{B}_k
\label{eq:H_poly}
\end{equation}
and consider the possibility of downfolding it to a Hermitian $M\times M$ ($M<N$) finite-degree polynomial matrix:
\begin{equation}
\mathbf{h}(x)\equiv\mathbf{a}+\sum_{l=1}^{L} x^l ~\mathbf{b}_l.
\label{eq:h_poly}
\end{equation}
Theorem~\ref{thm:Monodromy} applies to this case without modification
– as long as the characteristic polynomial $p_{\mathbf H}(E,x)$ is
irreducible, its roots can all be connected by monodromy. For the
analogous Theorem~\ref{thm:irreducible} for any polynomial matrix
$\mathbf{H}(x)$ with a reducible $p_{\mathbf{H}} (x,E)$,
$\mathbf{H}(x)$ is similar to a block-diagonal matrix where each block
is also a polynomial matrix in $x$ \cite{Hanselka17}. While in the
linear case the block polynomials can be chosen to have
the same degree as $\mathbf{H}(x)$, in the polynomial case some
block(s) could be polynomials of a different degree than the original
polynomial matrix (e.g. $K$ in Eq.~\ref{eq:H_poly} might not equal $L$
in Eq.~\ref{eq:h_poly}).
Finally, for Theorem~\ref{thm:strict}, one performs a
perturbative expansion to order $L+1$ (i.e. one higher than the order
of the polynomial matrix) to demonstrate that similarity must be
strict.
Combining these results, one recovers the analogous corollaries for any non-trivial polynomial $\mathbf{H}(x)$: none of its eigenvalue functions can equal any of the eigenvalue functions of $\mathbf{h}(x)$ if $M<N$. Thus, even for arbitrary polynomial dependence on $x$, exact global downfolding is still impossible. 

We can further generalize this by choosing some invertible function $x(s)$ (with inverse $s(x)$) and then defining:
\begin{equation}
\mathbf{\bar H}(s)\equiv\mathbf{H}(x(s))\equiv\mathbf{A}+\sum_{k=1}^{K} x(s)^k ~\mathbf{B}_k
\label{eq:H_non-poly}
\end{equation}
Choosing $x(s)\equiv e^{-\alpha s}$, for example, would give an expansion akin to a discrete Laplace transform expansion.
In this case it is also impossible for a non-trivial $\mathbf{\bar H}(s)$ to have an eigenvalue function $\bar E_i(s)$ equal to $\bar e_j(s)$ derived from a smaller $\mathbf{\bar h}(s)$. If it were, then $\bar E_i(s(x))$ (the eigenvalue of the polynomial matrix $\mathbf{H}(x)$)  would be equal to $e_j(x)=\bar e_i(s(x))$ (the eigenvalue of a polynomial matrix $\mathbf{h}(x)$) in contradiction to the above results.

The fundamental limitation of the preceding formulations is that the matrix functions involved are all single-valued, whereas the eigenvalue function, $E_k(x)$, is multivalued.
The functions, $e_j(x)$ we are using to approximate $E_k(x)$ are also multivalued, but when $M<N$ the approximate functions have the wrong number of values at the point $x$ (i.e. the fiber has the wrong dimension). A single-valued transformation like a polynomial cannot alter this situation. As result, one expects the preceding no-downfolding theorems should apply to any single-valued $\mathbf{h}(x)$, though such a result would be difficult to prove.

By contrast, if we allow $\mathbf{h}(x)$ to be a (multi-valued) \emph{algebraic} function, exact downfolding once again becomes trivial. Specifically, one could choose $\mathbf{h}(x)$ to be a diagonal matrix with the chosen eigenvalues $E_i(x)$ as the diagonal elements. This type of exact downfolding gives a formal basis to non-perturbative methods like renormalization group \cite{GlazekWilson94,Wegner94} and canonical transformation theory \cite{YanaiChan06}. However, this trivial framing of the problem is unsatisfying because it is self-referential: solving for $E_i(x)$ is itself the original goal.

One would obviously prefer an exact algebraic downfolding in which each of the algebraic functions in $\mathbf{h}(x)$ is simpler to obtain than the full $E_i(x)$. In this sense, one could hope to 'unravel' the complexity of solving the full eigenvalue problem into a series of smaller problems. There has not been much work on the question of when this is possible, and the question is hampered by the fact that almost every high-order polynomial cannot be written as a composition of lower-order polynomials \cite{DecomposablePolynomials}. However, an exact unraveling in terms of algebraic functions is not completely impossible and approaches of this type are worthy of future study.

The most widely used solution to this issue is to make the downfolded Hamiltonian energy dependent (i.e. $\mathbf{h}(x,E)$). This formulation is the basis of many downfolding approaches\cite{Lowdin63,Morris94,BravyiLoss11}, including those based on many-body Green's functions \cite{Hedin,GWReview,DMFT,scGW}. These approaches do not suffer from the same limitations as the energy-independent frameworks described above and downfolding is quite clearly possible with an energy dependent effective Hamiltonian. However, this comes at a cost, in that the multi-valued nature of $E_i(x)$ has been "rolled up" into the energy, $E$. As a result, one is typically not describing a range of states with a single $\mathbf{h}(x,E)$ but rather assigning each state its own specific $\mathbf{h}_i(x,E_i)$. It would thus not be possible to address multiple exact energies with a single quantum simulation of $\mathbf{h}_i(x,E_i)$, which is one key purpose of downfolding from a quantum information standpoint.


In terms of gaining an intuition about the origin of this no-downfolding property, one approach is to consider the case of two real parameters, $y$ and $z$, rather than a single complex argument:
\begin{equation}
    \mathbf{H}(y,z)\equiv \mathbf{A}+ y \mathbf{B} + z \mathbf{C}
\end{equation}
The resulting eigenvalues, $E_i(y,z)$, are real for Hermitian $\mathbf{A},\mathbf{B},\mathbf{C}$. As is well-known, in the absence of symmetry a Hamiltonian of this form admits degeneracies  $E_i(y^*,z^*)=E_j(y^*,z^*)$ at isolated points $(y^*,z^*)$ termed conical intersections \cite{Yarkony96,WorthCederbaum04,LevineMartinez07} or diabolical points \cite{BerryWilkinson84,Murakami07}. In this case, one can move continuously from one eigenvalue to another without resorting to complex parameters by passing through an intersection. Suppose every pair of adjacent eigenvalues $[i,i+1]$ had a point of intersection $(y^*_i,z^*_i)$. If this were the case, one could simply connect state $i$ to state $k$  ($k>i$) at $(y_0,z_0)$ by tracing a path, $y(t),z(t)$  that begins at $(y_0,z_0)$ through $(y^*_i,z^*_i)$,$(y^*_{i+1},z^*_{i+1})$,...
$(y^*_{k-1},z^*_{k-1})$ before finally returning to $(y_0,z_0)$. In this way, one would show that $E_i(y(s),z(s))$ and $E_k(y(s),z(s))$ are a single-valued, smooth function of $s$ but a multi-valued function of $y,z$ - which would again prove the impossibility of exact quantum downfolding. While this construction is physically appealing and certainly seems plausible, it is not clear what conditions on $\mathbf{A},\mathbf{B}$ and $\mathbf{C}$ are required to guarantee intersections between all adjacent eigenvalues. The construction in terms of a single complex variable makes this heuristic argument rigorous. 

As an interesting aside: it is widely recognized that, in the absence of symmetry, a single real parameter $x$ generally does not introduce degeneracies (the "non-crossing" rule\cite{Teller37}). The present results show that, in the absence of symmetry, the introduction of a single \emph{complex} parameter is sufficient to create degeneracies that link \emph{all} pairs of eigenvalues: the shuffling involved in monodromy is generated entirely by loops that pass around various intersection points (branch points) in the complex plane.

One can also obtain a physical picture of the present approach by appealing to various semiclassical approximations \cite{MillerGeorge72,Pechukas77} in which it has been noted that making time complex can allow classical trajectories to effectively transition between two different quantum states. In this letter, we use an external parameter, $x$, instead of time, $t$,  to drive transitions between states. Additionally, the evolution here is adiabatic instead of following an equation of motion. However the semiclassical picture is similar in spirit and provided substantial inspiration for the present approach.


The practical implications of these results are significant. For downfolding approaches on classical computers, it is now clear that one is typically attempting to approximate an $N$-valued function with $M$-valued approximants. This approach is only well defined when anchored to a particular point $x_0$ and cannot be globally extended to a solution for all $x$. Attempts to do so will result in divergence, the (dis)appearance of fictitious states and/or discontinuity of the solution. Indeed, such results are commonly encountered in practice for downfolding methods and are generically referred to as the intruder state problem \cite{EvangleistiMalrieu87}. It has often been assumed that intruder state behavior was a flaw of approximate downfolding methods, and that downfolding could be made intruder-state free by the judicious regularization, broadening and/or energy level shifts \cite{Vincent73,ChoeHirao01,WitekHirao02,Nikolić_2004,ChangWitek12,BattagliaLindh22,MoninoLoos22}. The present results demonstrate that intruder states cannot be globally removed from downfolded Hamiltonians and must instead be accepted as a feature of the approach.

For quantum simulation, the impacts are equally significant. Given that it is not possible to globally downfold quantum Hamiltonians, it seems much harder to avoid the situation in which large systems simply require large numbers of qubits for quantum simulation. On the one hand, this means that practical impacts of quantum computing may be somewhat further into the future than previously anticipated: one cannot rely on downfolding to squeeze larger systems onto smaller quantum computers. At the same time, quantum advantage is now easier to realize: with sufficiently large numbers of qubits, one can be more confident there is not be a global classical downfolding that would allow the same system to be simulated with existing classical resources.

This work also has a significant impact on the application of approximate downfolding. 
For a given $\mathbf{H}(x)$ and $M < N$ there will be better and worse approximations. But there some non-zero smallest possible error, $\epsilon$, one can obtain via downfolding - and this optimal accuracy is fixed. 
$\epsilon$ may be small enough for some situations. 
But in other situations one instead requires a tool where the accuracy can be controlled  - either because the precision requirements of the physical system under study are not known \emph{a priori} or because one is interested in proof-of-concept algorithm verification. It is clear from the present work that downfolding will have limited utility in these scenarios.

In the near future, it would be interesting to explore how the present approach impacts multi-fragment embedding \cite{KniziaChan12,Bootstrap16} in which a large Hamiltonian is decomposed not into a single smaller Hamiltonian, but multiple fragment Hamiltonians. Similarly, given that energy-dependent downfolding is possible, it will be important to further explore how Green's function techniques could be implemented on quantum hardware\cite{DMFT,LanZgid15}, noting that the resulting effective Hamiltonian is often not Hermitian. Finally, there is a need to examine whether similar limits apply to tensor network states \cite{PengWu20,DMRG,Orus14} where downfolding is performed in the bond dimension space, rather than in Hilbert space.

\begin{acknowledgments}
\emph{Acknowledgements}--This work was supported by the U.S. Department of Energy, Oﬃce of Science, National Quantum Information Science Research Centers, Co-Design Center for Quantum Advantage under Grant No. DE-SC0012704. The author would like to thank Alan Edelman, Ike Chuang and Aram Harrow for helpful discussion.
\end{acknowledgments}


\bibliography{NoDownfolding}

\providecommand{\noopsort}[1]{}\providecommand{\singleletter}[1]{#1}%
\begin{thebibliography}{75}%
\makeatletter
\providecommand \@ifxundefined [1]{%
 \@ifx{#1\undefined}
}%
\providecommand \@ifnum [1]{%
 \ifnum #1\expandafter \@firstoftwo
 \else \expandafter \@secondoftwo
 \fi
}%
\providecommand \@ifx [1]{%
 \ifx #1\expandafter \@firstoftwo
 \else \expandafter \@secondoftwo
 \fi
}%
\providecommand \natexlab [1]{#1}%
\providecommand \enquote  [1]{``#1''}%
\providecommand \bibnamefont  [1]{#1}%
\providecommand \bibfnamefont [1]{#1}%
\providecommand \citenamefont [1]{#1}%
\providecommand \href@noop [0]{\@secondoftwo}%
\providecommand \href [0]{\begingroup \@sanitize@url \@href}%
\providecommand \@href[1]{\@@startlink{#1}\@@href}%
\providecommand \@@href[1]{\endgroup#1\@@endlink}%
\providecommand \@sanitize@url [0]{\catcode `\\12\catcode `\$12\catcode `\&12\catcode `\#12\catcode `\^12\catcode `\_12\catcode `\%12\relax}%
\providecommand \@@startlink[1]{}%
\providecommand \@@endlink[0]{}%
\providecommand \url  [0]{\begingroup\@sanitize@url \@url }%
\providecommand \@url [1]{\endgroup\@href {#1}{\urlprefix }}%
\providecommand \urlprefix  [0]{URL }%
\providecommand \Eprint [0]{\href }%
\providecommand \doibase [0]{https://doi.org/}%
\providecommand \selectlanguage [0]{\@gobble}%
\providecommand \bibinfo  [0]{\@secondoftwo}%
\providecommand \bibfield  [0]{\@secondoftwo}%
\providecommand \translation [1]{[#1]}%
\providecommand \BibitemOpen [0]{}%
\providecommand \bibitemStop [0]{}%
\providecommand \bibitemNoStop [0]{.\EOS\space}%
\providecommand \EOS [0]{\spacefactor3000\relax}%
\providecommand \BibitemShut  [1]{\csname bibitem#1\endcsname}%
\let\auto@bib@innerbib\@empty
\bibitem [{\citenamefont {Gonthier}\ \emph {et~al.}(2022)\citenamefont {Gonthier}, \citenamefont {Radin}, \citenamefont {Buda}, \citenamefont {Doskocil}, \citenamefont {Abuan},\ and\ \citenamefont {Romero}}]{GonthierRomero22}%
  \BibitemOpen
  \bibfield  {author} {\bibinfo {author} {\bibfnamefont {J.~F.}\ \bibnamefont {Gonthier}}, \bibinfo {author} {\bibfnamefont {M.~D.}\ \bibnamefont {Radin}}, \bibinfo {author} {\bibfnamefont {C.}~\bibnamefont {Buda}}, \bibinfo {author} {\bibfnamefont {E.~J.}\ \bibnamefont {Doskocil}}, \bibinfo {author} {\bibfnamefont {C.~M.}\ \bibnamefont {Abuan}},\ and\ \bibinfo {author} {\bibfnamefont {J.}~\bibnamefont {Romero}},\ }\bibfield  {title} {\bibinfo {title} {Measurements as a roadblock to near-term practical quantum advantage in chemistry: Resource analysis},\ }\bibfield  {journal} {\bibinfo  {journal} {Phys. Rev. Res.}\ }\textbf {\bibinfo {volume} {4}},\ \href {https://doi.org/10.1103/PhysRevResearch.4.033154} {10.1103/PhysRevResearch.4.033154} (\bibinfo {year} {2022})\BibitemShut {NoStop}%
\bibitem [{\citenamefont {Delgado}\ \emph {et~al.}(2022)\citenamefont {Delgado}, \citenamefont {Casares}, \citenamefont {dos Reis}, \citenamefont {Zini}, \citenamefont {Campos}, \citenamefont {Cruz-Hernandez}, \citenamefont {Voigt}, \citenamefont {Lowe}, \citenamefont {Jahangiri}, \citenamefont {Martin-Delgado}, \citenamefont {Mueller},\ and\ \citenamefont {Arrazola}}]{DelgadoMiguel22}%
  \BibitemOpen
  \bibfield  {author} {\bibinfo {author} {\bibfnamefont {A.}~\bibnamefont {Delgado}}, \bibinfo {author} {\bibfnamefont {P.~A.~M.}\ \bibnamefont {Casares}}, \bibinfo {author} {\bibfnamefont {R.}~\bibnamefont {dos Reis}}, \bibinfo {author} {\bibfnamefont {M.~S.}\ \bibnamefont {Zini}}, \bibinfo {author} {\bibfnamefont {R.}~\bibnamefont {Campos}}, \bibinfo {author} {\bibfnamefont {N.}~\bibnamefont {Cruz-Hernandez}}, \bibinfo {author} {\bibfnamefont {A.-C.}\ \bibnamefont {Voigt}}, \bibinfo {author} {\bibfnamefont {A.}~\bibnamefont {Lowe}}, \bibinfo {author} {\bibfnamefont {S.}~\bibnamefont {Jahangiri}}, \bibinfo {author} {\bibfnamefont {M.~A.}\ \bibnamefont {Martin-Delgado}}, \bibinfo {author} {\bibfnamefont {J.~E.}\ \bibnamefont {Mueller}},\ and\ \bibinfo {author} {\bibfnamefont {J.~M.}\ \bibnamefont {Arrazola}},\ }\bibfield  {title} {\bibinfo {title} {Simulating key properties of lithium-ion batteries with a fault-tolerant quantum computer},\ }\bibfield  {journal} {\bibinfo  {journal} {Phys. Rev. A}\ }\textbf
  {\bibinfo {volume} {106}},\ \href {https://doi.org/10.1103/PhysRevA.106.032428} {10.1103/PhysRevA.106.032428} (\bibinfo {year} {2022})\BibitemShut {NoStop}%
\bibitem [{\citenamefont {Clinton}\ \emph {et~al.}(2024)\citenamefont {Clinton}, \citenamefont {Cubitt}, \citenamefont {Flynn}, \citenamefont {Gambetta}, \citenamefont {Klassen}, \citenamefont {Montanaro}, \citenamefont {Piddock}, \citenamefont {Santos},\ and\ \citenamefont {Sheridan}}]{ClintonSheridan24}%
  \BibitemOpen
  \bibfield  {author} {\bibinfo {author} {\bibfnamefont {L.}~\bibnamefont {Clinton}}, \bibinfo {author} {\bibfnamefont {T.}~\bibnamefont {Cubitt}}, \bibinfo {author} {\bibfnamefont {B.}~\bibnamefont {Flynn}}, \bibinfo {author} {\bibfnamefont {F.~M.}\ \bibnamefont {Gambetta}}, \bibinfo {author} {\bibfnamefont {J.}~\bibnamefont {Klassen}}, \bibinfo {author} {\bibfnamefont {A.}~\bibnamefont {Montanaro}}, \bibinfo {author} {\bibfnamefont {S.}~\bibnamefont {Piddock}}, \bibinfo {author} {\bibfnamefont {R.~A.}\ \bibnamefont {Santos}},\ and\ \bibinfo {author} {\bibfnamefont {E.}~\bibnamefont {Sheridan}},\ }\bibfield  {title} {\bibinfo {title} {Towards near-term quantum simulation of materials},\ }\bibfield  {journal} {\bibinfo  {journal} {Nat. Comm.}\ }\textbf {\bibinfo {volume} {15}},\ \href {https://doi.org/10.1038/s41467-023-43479-6} {10.1038/s41467-023-43479-6} (\bibinfo {year} {2024})\BibitemShut {NoStop}%
\bibitem [{\citenamefont {Bluvstein}\ \emph {et~al.}(2024)\citenamefont {Bluvstein}, \citenamefont {Evered}, \citenamefont {Geim}, \citenamefont {Li}, \citenamefont {Zhou}, \citenamefont {Manovitz}, \citenamefont {Ebadi}, \citenamefont {Cain}, \citenamefont {Kalinowski}, \citenamefont {Hangleiter}, \citenamefont {Ataides}, \citenamefont {Maskara}, \citenamefont {Cong}, \citenamefont {Gao}, \citenamefont {Sales~Rodriguez}, \citenamefont {Karolyshyn}, \citenamefont {Semeghini}, \citenamefont {Gullans}, \citenamefont {Greiner}, \citenamefont {Vuletic},\ and\ \citenamefont {Lukin}}]{BluvsteinLukin24}%
  \BibitemOpen
  \bibfield  {author} {\bibinfo {author} {\bibfnamefont {D.}~\bibnamefont {Bluvstein}}, \bibinfo {author} {\bibfnamefont {S.~J.}\ \bibnamefont {Evered}}, \bibinfo {author} {\bibfnamefont {A.~A.}\ \bibnamefont {Geim}}, \bibinfo {author} {\bibfnamefont {S.~H.}\ \bibnamefont {Li}}, \bibinfo {author} {\bibfnamefont {H.}~\bibnamefont {Zhou}}, \bibinfo {author} {\bibfnamefont {T.}~\bibnamefont {Manovitz}}, \bibinfo {author} {\bibfnamefont {S.}~\bibnamefont {Ebadi}}, \bibinfo {author} {\bibfnamefont {M.}~\bibnamefont {Cain}}, \bibinfo {author} {\bibfnamefont {M.}~\bibnamefont {Kalinowski}}, \bibinfo {author} {\bibfnamefont {D.}~\bibnamefont {Hangleiter}}, \bibinfo {author} {\bibfnamefont {J.~P.~B.}\ \bibnamefont {Ataides}}, \bibinfo {author} {\bibfnamefont {N.}~\bibnamefont {Maskara}}, \bibinfo {author} {\bibfnamefont {I.}~\bibnamefont {Cong}}, \bibinfo {author} {\bibfnamefont {X.}~\bibnamefont {Gao}}, \bibinfo {author} {\bibfnamefont {P.}~\bibnamefont {Sales~Rodriguez}}, \bibinfo {author} {\bibfnamefont
  {T.}~\bibnamefont {Karolyshyn}}, \bibinfo {author} {\bibfnamefont {G.}~\bibnamefont {Semeghini}}, \bibinfo {author} {\bibfnamefont {M.~J.}\ \bibnamefont {Gullans}}, \bibinfo {author} {\bibfnamefont {M.}~\bibnamefont {Greiner}}, \bibinfo {author} {\bibfnamefont {V.}~\bibnamefont {Vuletic}},\ and\ \bibinfo {author} {\bibfnamefont {M.~D.}\ \bibnamefont {Lukin}},\ }\bibfield  {title} {\bibinfo {title} {Logical quantum processor based on reconfigurable atom arrays},\ }\href {https://doi.org/10.1038/s41586-023-06927-3} {\bibfield  {journal} {\bibinfo  {journal} {Nature}\ }\textbf {\bibinfo {volume} {626}},\ \bibinfo {pages} {58} (\bibinfo {year} {2024})}\BibitemShut {NoStop}%
\bibitem [{\citenamefont {Bauer}\ \emph {et~al.}(2016)\citenamefont {Bauer}, \citenamefont {Wecker}, \citenamefont {Millis}, \citenamefont {Hastings},\ and\ \citenamefont {Troyer}}]{BauerTroyer16}%
  \BibitemOpen
  \bibfield  {author} {\bibinfo {author} {\bibfnamefont {B.}~\bibnamefont {Bauer}}, \bibinfo {author} {\bibfnamefont {D.}~\bibnamefont {Wecker}}, \bibinfo {author} {\bibfnamefont {A.~J.}\ \bibnamefont {Millis}}, \bibinfo {author} {\bibfnamefont {M.~B.}\ \bibnamefont {Hastings}},\ and\ \bibinfo {author} {\bibfnamefont {M.}~\bibnamefont {Troyer}},\ }\bibfield  {title} {\bibinfo {title} {Hybrid quantum-classical approach to correlated materials},\ }\bibfield  {journal} {\bibinfo  {journal} {Phys. Rev. X}\ }\textbf {\bibinfo {volume} {6}},\ \href {https://doi.org/10.1103/PhysRevX.6.031045} {10.1103/PhysRevX.6.031045} (\bibinfo {year} {2016})\BibitemShut {NoStop}%
\bibitem [{\citenamefont {Kowalski}(2021)}]{Kowalski21}%
  \BibitemOpen
  \bibfield  {author} {\bibinfo {author} {\bibfnamefont {K.}~\bibnamefont {Kowalski}},\ }\bibfield  {title} {\bibinfo {title} {Dimensionality reduction of the many-body problem using coupled-cluster subsystem flow equations: Classical and quantum computing perspective},\ }\bibfield  {journal} {\bibinfo  {journal} {Phys. Rev. A}\ }\textbf {\bibinfo {volume} {104}},\ \href {https://doi.org/10.1103/PhysRevA.104.032804} {10.1103/PhysRevA.104.032804} (\bibinfo {year} {2021})\BibitemShut {NoStop}%
\bibitem [{\citenamefont {Huang}\ \emph {et~al.}(2023)\citenamefont {Huang}, \citenamefont {Li},\ and\ \citenamefont {Evangelista}}]{HuangEvangelista23}%
  \BibitemOpen
  \bibfield  {author} {\bibinfo {author} {\bibfnamefont {R.}~\bibnamefont {Huang}}, \bibinfo {author} {\bibfnamefont {C.}~\bibnamefont {Li}},\ and\ \bibinfo {author} {\bibfnamefont {F.~A.}\ \bibnamefont {Evangelista}},\ }\bibfield  {title} {\bibinfo {title} {Leveraging small-scale quantum computers with unitarily downfolded hamiltonians},\ }\bibfield  {journal} {\bibinfo  {journal} {PRX Quant.}\ }\textbf {\bibinfo {volume} {4}},\ \href {https://doi.org/10.1103/PRXQuantum.4.020313} {10.1103/PRXQuantum.4.020313} (\bibinfo {year} {2023})\BibitemShut {NoStop}%
\bibitem [{\citenamefont {G\"unther}\ \emph {et~al.}(2024)\citenamefont {G\"unther}, \citenamefont {Baiardi}, \citenamefont {Reiher},\ and\ \citenamefont {Christandl}}]{GuntherChristandl24}%
  \BibitemOpen
  \bibfield  {author} {\bibinfo {author} {\bibfnamefont {J.}~\bibnamefont {G\"unther}}, \bibinfo {author} {\bibfnamefont {A.}~\bibnamefont {Baiardi}}, \bibinfo {author} {\bibfnamefont {M.}~\bibnamefont {Reiher}},\ and\ \bibinfo {author} {\bibfnamefont {M.}~\bibnamefont {Christandl}},\ }\bibfield  {title} {\bibinfo {title} {More quantum chemistry with fewer qubits},\ }\href {https://doi.org/10.1103/PhysRevResearch.6.043021} {\bibfield  {journal} {\bibinfo  {journal} {Phys. Rev. Res.}\ }\textbf {\bibinfo {volume} {6}},\ \bibinfo {pages} {043021} (\bibinfo {year} {2024})}\BibitemShut {NoStop}%
\bibitem [{\citenamefont {Alvertis}\ \emph {et~al.}(2025)\citenamefont {Alvertis}, \citenamefont {Khan},\ and\ \citenamefont {Tubman}}]{AlvertisTubman25}%
  \BibitemOpen
  \bibfield  {author} {\bibinfo {author} {\bibfnamefont {A.~M.}\ \bibnamefont {Alvertis}}, \bibinfo {author} {\bibfnamefont {A.}~\bibnamefont {Khan}},\ and\ \bibinfo {author} {\bibfnamefont {N.~M.}\ \bibnamefont {Tubman}},\ }\bibfield  {title} {\bibinfo {title} {Compressing hamiltonians with ab initio downfolding for simulating strongly-correlated materials on quantum computers},\ }\bibfield  {journal} {\bibinfo  {journal} {Phys. Rev. Appl.}\ }\textbf {\bibinfo {volume} {23}},\ \href {https://doi.org/10.1103/PhysRevApplied.23.044028} {10.1103/PhysRevApplied.23.044028} (\bibinfo {year} {2025})\BibitemShut {NoStop}%
\bibitem [{\citenamefont {Peng}\ \emph {et~al.}(2020)\citenamefont {Peng}, \citenamefont {Harrow}, \citenamefont {Ozols},\ and\ \citenamefont {Wu}}]{PengWu20}%
  \BibitemOpen
  \bibfield  {author} {\bibinfo {author} {\bibfnamefont {T.}~\bibnamefont {Peng}}, \bibinfo {author} {\bibfnamefont {A.~W.}\ \bibnamefont {Harrow}}, \bibinfo {author} {\bibfnamefont {M.}~\bibnamefont {Ozols}},\ and\ \bibinfo {author} {\bibfnamefont {X.}~\bibnamefont {Wu}},\ }\bibfield  {title} {\bibinfo {title} {Simulating large quantum circuits on a small quantum computer},\ }\href@noop {} {\bibfield  {journal} {\bibinfo  {journal} {Phys. Rev. Lett.}\ }\textbf {\bibinfo {volume} {125}} (\bibinfo {year} {2020})}\BibitemShut {NoStop}%
\bibitem [{\citenamefont {Boys}\ and\ \citenamefont {Handy}(1969)}]{Transcorrelated}%
  \BibitemOpen
  \bibfield  {author} {\bibinfo {author} {\bibfnamefont {S.~F.}\ \bibnamefont {Boys}}\ and\ \bibinfo {author} {\bibfnamefont {N.~C.}\ \bibnamefont {Handy}},\ }\bibfield  {title} {\bibinfo {title} {Determination of energies and wavefunctions with full electronic correlation},\ }\href {https://doi.org/10.1098/rspa.1969.0061} {\bibfield  {journal} {\bibinfo  {journal} {Proc. Roy. Soc. A - Math. Phys.}\ }\textbf {\bibinfo {volume} {310}},\ \bibinfo {pages} {43} (\bibinfo {year} {1969})}\BibitemShut {NoStop}%
\bibitem [{\citenamefont {Zhang}\ and\ \citenamefont {Rice}(1988)}]{ZhangRice88}%
  \BibitemOpen
  \bibfield  {author} {\bibinfo {author} {\bibfnamefont {F.~C.}\ \bibnamefont {Zhang}}\ and\ \bibinfo {author} {\bibfnamefont {T.~M.}\ \bibnamefont {Rice}},\ }\bibfield  {title} {\bibinfo {title} {Effective hamiltonian for the superconducting cu oxides},\ }\href {https://doi.org/10.1103/PhysRevB.37.3759} {\bibfield  {journal} {\bibinfo  {journal} {Phys. Rev. B}\ }\textbf {\bibinfo {volume} {37}},\ \bibinfo {pages} {3759} (\bibinfo {year} {1988})}\BibitemShut {NoStop}%
\bibitem [{\citenamefont {James}\ and\ \citenamefont {Jerke}(2007)}]{JamesJerke07}%
  \BibitemOpen
  \bibfield  {author} {\bibinfo {author} {\bibfnamefont {D.~F.~V.}\ \bibnamefont {James}}\ and\ \bibinfo {author} {\bibfnamefont {J.}~\bibnamefont {Jerke}},\ }\bibfield  {title} {\bibinfo {title} {Effective hamiltonian theory and its applications in quantum information},\ }\href {https://doi.org/10.1139/P07-060} {\bibfield  {journal} {\bibinfo  {journal} {Can. J. Phys.}\ }\textbf {\bibinfo {volume} {85}},\ \bibinfo {pages} {625} (\bibinfo {year} {2007})}\BibitemShut {NoStop}%
\bibitem [{\citenamefont {Liu}\ \emph {et~al.}(2010)\citenamefont {Liu}, \citenamefont {Qi}, \citenamefont {Zhang}, \citenamefont {Dai}, \citenamefont {Fang},\ and\ \citenamefont {Zhang}}]{LiuZhang10}%
  \BibitemOpen
  \bibfield  {author} {\bibinfo {author} {\bibfnamefont {C.-X.}\ \bibnamefont {Liu}}, \bibinfo {author} {\bibfnamefont {X.-L.}\ \bibnamefont {Qi}}, \bibinfo {author} {\bibfnamefont {H.}~\bibnamefont {Zhang}}, \bibinfo {author} {\bibfnamefont {X.}~\bibnamefont {Dai}}, \bibinfo {author} {\bibfnamefont {Z.}~\bibnamefont {Fang}},\ and\ \bibinfo {author} {\bibfnamefont {S.-C.}\ \bibnamefont {Zhang}},\ }\bibfield  {title} {\bibinfo {title} {Model hamiltonian for topological insulators},\ }\bibfield  {journal} {\bibinfo  {journal} {Phys. Rev. B}\ }\textbf {\bibinfo {volume} {82}},\ \href {https://doi.org/10.1103/PhysRevB.82.045122} {10.1103/PhysRevB.82.045122} (\bibinfo {year} {2010})\BibitemShut {NoStop}%
\bibitem [{\citenamefont {Liu}\ \emph {et~al.}(2011)\citenamefont {Liu}, \citenamefont {Jiang},\ and\ \citenamefont {Yao}}]{LiuYao11}%
  \BibitemOpen
  \bibfield  {author} {\bibinfo {author} {\bibfnamefont {C.-C.}\ \bibnamefont {Liu}}, \bibinfo {author} {\bibfnamefont {H.}~\bibnamefont {Jiang}},\ and\ \bibinfo {author} {\bibfnamefont {Y.}~\bibnamefont {Yao}},\ }\bibfield  {title} {\bibinfo {title} {Low-energy effective hamiltonian involving spin-orbit coupling in silicene and two-dimensional germanium and tin},\ }\bibfield  {journal} {\bibinfo  {journal} {Phys. Rev. B}\ }\textbf {\bibinfo {volume} {84}},\ \href {https://doi.org/10.1103/PhysRevB.84.195430} {10.1103/PhysRevB.84.195430} (\bibinfo {year} {2011})\BibitemShut {NoStop}%
\bibitem [{\citenamefont {Bravyi}\ \emph {et~al.}(2011)\citenamefont {Bravyi}, \citenamefont {DiVincenzo},\ and\ \citenamefont {Loss}}]{BravyiLoss11}%
  \BibitemOpen
  \bibfield  {author} {\bibinfo {author} {\bibfnamefont {S.}~\bibnamefont {Bravyi}}, \bibinfo {author} {\bibfnamefont {D.~P.}\ \bibnamefont {DiVincenzo}},\ and\ \bibinfo {author} {\bibfnamefont {D.}~\bibnamefont {Loss}},\ }\bibfield  {title} {\bibinfo {title} {Schrieffer-wolff transformation for quantum many-body systems},\ }\href {https://doi.org/10.1016/j.aop.2011.06.004} {\bibfield  {journal} {\bibinfo  {journal} {Ann. Phys.}\ }\textbf {\bibinfo {volume} {326}},\ \bibinfo {pages} {2793} (\bibinfo {year} {2011})}\BibitemShut {NoStop}%
\bibitem [{\citenamefont {Goldman}\ and\ \citenamefont {Dalibard}(2014)}]{GoldmanDalibard14}%
  \BibitemOpen
  \bibfield  {author} {\bibinfo {author} {\bibfnamefont {N.}~\bibnamefont {Goldman}}\ and\ \bibinfo {author} {\bibfnamefont {J.}~\bibnamefont {Dalibard}},\ }\bibfield  {title} {\bibinfo {title} {Periodically driven quantum systems: Effective hamiltonians and engineered gauge fields},\ }\bibfield  {journal} {\bibinfo  {journal} {Phys. Rev. X}\ }\textbf {\bibinfo {volume} {4}},\ \href {https://doi.org/10.1103/PhysRevX.4.031027} {10.1103/PhysRevX.4.031027} (\bibinfo {year} {2014})\BibitemShut {NoStop}%
\bibitem [{\citenamefont {Chang}\ \emph {et~al.}(2024)\citenamefont {Chang}, \citenamefont {Joshi},\ and\ \citenamefont {Wagner}}]{ChangWagner24}%
  \BibitemOpen
  \bibfield  {author} {\bibinfo {author} {\bibfnamefont {Y.}~\bibnamefont {Chang}}, \bibinfo {author} {\bibfnamefont {S.}~\bibnamefont {Joshi}},\ and\ \bibinfo {author} {\bibfnamefont {L.~K.}\ \bibnamefont {Wagner}},\ }\bibfield  {title} {\bibinfo {title} {Renormalized density matrix downfolding: A rigorous framework in learning emergent models from ab initio many-body calculations},\ }\bibfield  {journal} {\bibinfo  {journal} {Phys. Rev. B}\ }\textbf {\bibinfo {volume} {110}},\ \href {https://doi.org/10.1103/PhysRevB.110.195103} {10.1103/PhysRevB.110.195103} (\bibinfo {year} {2024})\BibitemShut {NoStop}%
\bibitem [{\citenamefont {Andersson}\ \emph {et~al.}(1992)\citenamefont {Andersson}, \citenamefont {Malmqvist},\ and\ \citenamefont {Roos}}]{CASPT}%
  \BibitemOpen
  \bibfield  {author} {\bibinfo {author} {\bibfnamefont {K.}~\bibnamefont {Andersson}}, \bibinfo {author} {\bibfnamefont {P.~A.}\ \bibnamefont {Malmqvist}},\ and\ \bibinfo {author} {\bibfnamefont {B.~O.}\ \bibnamefont {Roos}},\ }\bibfield  {title} {\bibinfo {title} {2nd-order perturbation-theory with a complete active space self-consistent field reference function},\ }\href {https://doi.org/10.1063/1.462209} {\bibfield  {journal} {\bibinfo  {journal} {J. Chem. Phys.}\ }\textbf {\bibinfo {volume} {96}},\ \bibinfo {pages} {1218} (\bibinfo {year} {1992})}\BibitemShut {NoStop}%
\bibitem [{\citenamefont {Nakano}(1993)}]{MRMP}%
  \BibitemOpen
  \bibfield  {author} {\bibinfo {author} {\bibfnamefont {H.}~\bibnamefont {Nakano}},\ }\bibfield  {title} {\bibinfo {title} {Quasi-degenerate perturbation-theory with multiconfigurational self-consistent-field reference functions},\ }\href {https://doi.org/10.1063/1.465674} {\bibfield  {journal} {\bibinfo  {journal} {J. Chem. Phys.}\ }\textbf {\bibinfo {volume} {99}},\ \bibinfo {pages} {7983} (\bibinfo {year} {1993})}\BibitemShut {NoStop}%
\bibitem [{\citenamefont {Angeli}\ \emph {et~al.}(2001)\citenamefont {Angeli}, \citenamefont {Cimiraglia}, \citenamefont {Evangelisti}, \citenamefont {Leininger},\ and\ \citenamefont {Malrieu}}]{NEVPT}%
  \BibitemOpen
  \bibfield  {author} {\bibinfo {author} {\bibfnamefont {C.}~\bibnamefont {Angeli}}, \bibinfo {author} {\bibfnamefont {R.}~\bibnamefont {Cimiraglia}}, \bibinfo {author} {\bibfnamefont {S.}~\bibnamefont {Evangelisti}}, \bibinfo {author} {\bibfnamefont {T.}~\bibnamefont {Leininger}},\ and\ \bibinfo {author} {\bibfnamefont {J.}~\bibnamefont {Malrieu}},\ }\bibfield  {title} {\bibinfo {title} {Introduction of n-electron valence states for multireference perturbation theory},\ }\href {https://doi.org/10.1063/1.1361246} {\bibfield  {journal} {\bibinfo  {journal} {J. Chem. Phys.}\ }\textbf {\bibinfo {volume} {114}},\ \bibinfo {pages} {10252} (\bibinfo {year} {2001})}\BibitemShut {NoStop}%
\bibitem [{\citenamefont {Roos}\ \emph {et~al.}(1996)\citenamefont {Roos}, \citenamefont {Andersson}, \citenamefont {Fulscher}, \citenamefont {Malmqvist}, \citenamefont {SerranoAndres}, \citenamefont {Pierloot},\ and\ \citenamefont {Merchan}}]{RoosMerchan96}%
  \BibitemOpen
  \bibfield  {author} {\bibinfo {author} {\bibfnamefont {B.}~\bibnamefont {Roos}}, \bibinfo {author} {\bibfnamefont {K.}~\bibnamefont {Andersson}}, \bibinfo {author} {\bibfnamefont {M.}~\bibnamefont {Fulscher}}, \bibinfo {author} {\bibfnamefont {P.}~\bibnamefont {Malmqvist}}, \bibinfo {author} {\bibfnamefont {L.}~\bibnamefont {SerranoAndres}}, \bibinfo {author} {\bibfnamefont {K.}~\bibnamefont {Pierloot}},\ and\ \bibinfo {author} {\bibfnamefont {M.}~\bibnamefont {Merchan}},\ }\bibfield  {title} {\bibinfo {title} {Multiconfigurational perturbation theory: Applications in electronic spectroscopy},\ }in\ \href {https://doi.org/10.1002/9780470141526.ch5} {\emph {\bibinfo {booktitle} {Adv. Chem. Phys.}}},\ \bibinfo {series} {Adv. Chem. Phys.}, Vol.~\bibinfo {volume} {93}\ (\bibinfo {year} {1996})\ pp.\ \bibinfo {pages} {219--331}\BibitemShut {NoStop}%
\bibitem [{\citenamefont {Szalay}\ \emph {et~al.}(2012)\citenamefont {Szalay}, \citenamefont {Mueller}, \citenamefont {Gidofalvi}, \citenamefont {Lischka},\ and\ \citenamefont {Shepard}}]{MRCI}%
  \BibitemOpen
  \bibfield  {author} {\bibinfo {author} {\bibfnamefont {P.~G.}\ \bibnamefont {Szalay}}, \bibinfo {author} {\bibfnamefont {T.}~\bibnamefont {Mueller}}, \bibinfo {author} {\bibfnamefont {G.}~\bibnamefont {Gidofalvi}}, \bibinfo {author} {\bibfnamefont {H.}~\bibnamefont {Lischka}},\ and\ \bibinfo {author} {\bibfnamefont {R.}~\bibnamefont {Shepard}},\ }\bibfield  {title} {\bibinfo {title} {Multiconfiguration self-consistent field and multireference configuration interaction methods and applications},\ }\href {https://doi.org/10.1021/cr200137a} {\bibfield  {journal} {\bibinfo  {journal} {Chem. Rev.}\ }\textbf {\bibinfo {volume} {112}},\ \bibinfo {pages} {108} (\bibinfo {year} {2012})}\BibitemShut {NoStop}%
\bibitem [{\citenamefont {Gellmann}\ and\ \citenamefont {Hartle}(1993)}]{GellmannHartle93}%
  \BibitemOpen
  \bibfield  {author} {\bibinfo {author} {\bibfnamefont {M.}~\bibnamefont {Gellmann}}\ and\ \bibinfo {author} {\bibfnamefont {J.}~\bibnamefont {Hartle}},\ }\bibfield  {title} {\bibinfo {title} {Classical equations for quantum systems},\ }\href {https://doi.org/10.1103/PhysRevD.47.3345} {\bibfield  {journal} {\bibinfo  {journal} {Phys. Rev. D}\ }\textbf {\bibinfo {volume} {47}},\ \bibinfo {pages} {3345} (\bibinfo {year} {1993})}\BibitemShut {NoStop}%
\bibitem [{\citenamefont {Holzhey}\ \emph {et~al.}(1994)\citenamefont {Holzhey}, \citenamefont {Larsen},\ and\ \citenamefont {Wilczek}}]{HolzheyWilczek94}%
  \BibitemOpen
  \bibfield  {author} {\bibinfo {author} {\bibfnamefont {C.}~\bibnamefont {Holzhey}}, \bibinfo {author} {\bibfnamefont {F.}~\bibnamefont {Larsen}},\ and\ \bibinfo {author} {\bibfnamefont {F.}~\bibnamefont {Wilczek}},\ }\bibfield  {title} {\bibinfo {title} {Gometric and renormalized entropy in conformal field-theory},\ }\href {https://doi.org/10.1016/0550-3213(94)90402-2} {\bibfield  {journal} {\bibinfo  {journal} {Nuc. Phys. B}\ }\textbf {\bibinfo {volume} {424}},\ \bibinfo {pages} {443} (\bibinfo {year} {1994})}\BibitemShut {NoStop}%
\bibitem [{\citenamefont {Wilson}(1975)}]{Wilson75}%
  \BibitemOpen
  \bibfield  {author} {\bibinfo {author} {\bibfnamefont {K.~G.}\ \bibnamefont {Wilson}},\ }\bibfield  {title} {\bibinfo {title} {Renormalization group - critical phenomena and kondo problem},\ }\href {https://doi.org/10.1103/RevModPhys.47.773} {\bibfield  {journal} {\bibinfo  {journal} {Rev. Mod. Phys.}\ }\textbf {\bibinfo {volume} {47}},\ \bibinfo {pages} {773} (\bibinfo {year} {1975})}\BibitemShut {NoStop}%
\bibitem [{\citenamefont {Glazek}\ and\ \citenamefont {Wilson}(1994)}]{GlazekWilson94}%
  \BibitemOpen
  \bibfield  {author} {\bibinfo {author} {\bibfnamefont {S.~D.}\ \bibnamefont {Glazek}}\ and\ \bibinfo {author} {\bibfnamefont {K.~G.}\ \bibnamefont {Wilson}},\ }\bibfield  {title} {\bibinfo {title} {Phys. rev. d},\ }\href@noop {} {\bibfield  {journal} {\bibinfo  {journal} {Perturbative Renormalization Group for Hamiltonians}\ }\textbf {\bibinfo {volume} {49}},\ \bibinfo {pages} {4214} (\bibinfo {year} {1994})}\BibitemShut {NoStop}%
\bibitem [{\citenamefont {White}(1992)}]{DMRG}%
  \BibitemOpen
  \bibfield  {author} {\bibinfo {author} {\bibfnamefont {S.~R.}\ \bibnamefont {White}},\ }\bibfield  {title} {\bibinfo {title} {Density-matrix formulation for quantum renormalization-groups},\ }\href {https://doi.org/10.1103/PhysRevLett.69.2863} {\bibfield  {journal} {\bibinfo  {journal} {Phys. Rev. Lett.}\ }\textbf {\bibinfo {volume} {69}},\ \bibinfo {pages} {2863} (\bibinfo {year} {1992})}\BibitemShut {NoStop}%
\bibitem [{\citenamefont {Vidal}(2007)}]{Vidal07}%
  \BibitemOpen
  \bibfield  {author} {\bibinfo {author} {\bibfnamefont {G.}~\bibnamefont {Vidal}},\ }\bibfield  {title} {\bibinfo {title} {Entanglement renormalization},\ }\bibfield  {journal} {\bibinfo  {journal} {Phys. Rev. Lett.}\ }\textbf {\bibinfo {volume} {99}},\ \href {https://doi.org/10.1103/PhysRevLett.99.220405} {10.1103/PhysRevLett.99.220405} (\bibinfo {year} {2007})\BibitemShut {NoStop}%
\bibitem [{\citenamefont {Bender}(2007)}]{Bender07}%
  \BibitemOpen
  \bibfield  {author} {\bibinfo {author} {\bibfnamefont {C.~M.}\ \bibnamefont {Bender}},\ }\bibfield  {title} {\bibinfo {title} {Making sense of non-hermitian hamiltonians},\ }\href@noop {} {\bibfield  {journal} {\bibinfo  {journal} {Rep. Prog. Phys.}\ }\textbf {\bibinfo {volume} {70}},\ \bibinfo {pages} {947} (\bibinfo {year} {2007})}\BibitemShut {NoStop}%
\bibitem [{\citenamefont {Moiseyev}(2011)}]{Moiseyev11}%
  \BibitemOpen
  \bibfield  {author} {\bibinfo {author} {\bibfnamefont {N.}~\bibnamefont {Moiseyev}},\ }\href {https://doi.org/https://doi.org/10.1017/CBO9780511976186} {\emph {\bibinfo {title} {Non-Hermitian Quantum Mechanics}}}\ (\bibinfo  {publisher} {Cambridge University Press},\ \bibinfo {year} {2011})\BibitemShut {NoStop}%
\bibitem [{\citenamefont {Artin}(2011)}]{Artin}%
  \BibitemOpen
  \bibfield  {author} {\bibinfo {author} {\bibfnamefont {M.}~\bibnamefont {Artin}},\ }\href {https://books.google.com/books?id=QsOfPwAACAAJ} {\emph {\bibinfo {title} {Algebra}}}\ (\bibinfo  {publisher} {Pearson Education},\ \bibinfo {year} {2011})\BibitemShut {NoStop}%
\bibitem [{\citenamefont {Miranda}(1995)}]{Miranda}%
  \BibitemOpen
  \bibfield  {author} {\bibinfo {author} {\bibfnamefont {R.}~\bibnamefont {Miranda}},\ }\href@noop {} {\emph {\bibinfo {title} {Algebraic curves and Riemann surfaces}}}\ (\bibinfo  {publisher} {Amer. Math. Soc.},\ \bibinfo {year} {1995})\BibitemShut {NoStop}%
\bibitem [{\citenamefont {Shafarevich}(1977)}]{Shafarevich}%
  \BibitemOpen
  \bibfield  {author} {\bibinfo {author} {\bibfnamefont {I.~R.}\ \bibnamefont {Shafarevich}},\ }\href@noop {} {\emph {\bibinfo {title} {Basic Algebraic Geometry}}}\ (\bibinfo  {publisher} {Springer-Verlag Berlin},\ \bibinfo {year} {1977})\BibitemShut {NoStop}%
\bibitem [{\citenamefont {Hanselka}(2017)}]{Hanselka17}%
  \BibitemOpen
  \bibfield  {author} {\bibinfo {author} {\bibfnamefont {C.}~\bibnamefont {Hanselka}},\ }\bibfield  {title} {\bibinfo {title} {J. alg.},\ }\href@noop {} {\bibfield  {journal} {\bibinfo  {journal} {Characteristic polynomials of symmetric matrices over the univariate polynomial ring}\ }\textbf {\bibinfo {volume} {487}},\ \bibinfo {pages} {340} (\bibinfo {year} {2017})}\BibitemShut {NoStop}%
\bibitem [{\citenamefont {Nazarova}(1974)}]{Nazarova74}%
  \BibitemOpen
  \bibfield  {author} {\bibinfo {author} {\bibfnamefont {L.~A.}\ \bibnamefont {Nazarova}},\ }\bibfield  {title} {\bibinfo {title} {Representations of partially ordered sets of infinite type},\ }\href@noop {} {\bibfield  {journal} {\bibinfo  {journal} {Funkcional. Anal. i Prilo\v zen.}\ }\textbf {\bibinfo {volume} {8}},\ \bibinfo {pages} {93} (\bibinfo {year} {1974})}\BibitemShut {NoStop}%
\bibitem [{\citenamefont {Drozd}(1977)}]{Drozd77}%
  \BibitemOpen
  \bibfield  {author} {\bibinfo {author} {\bibfnamefont {Y.~A.}\ \bibnamefont {Drozd}},\ }\bibinfo {title} {Tame and wild matrix problems},\ in\ \href@noop {} {\emph {\bibinfo {booktitle} {Matrix Problems}}}\ (\bibinfo  {publisher} {Akad. Nauk. Ukrain. SSR Inst. Mat.},\ \bibinfo {address} {Kiev},\ \bibinfo {year} {1977})\ pp.\ \bibinfo {pages} {104--114}\BibitemShut {NoStop}%
\bibitem [{\citenamefont {Kippenhahn}(1951)}]{Kippenhahn51}%
  \BibitemOpen
  \bibfield  {author} {\bibinfo {author} {\bibfnamefont {R.}~\bibnamefont {Kippenhahn}},\ }\bibfield  {title} {\bibinfo {title} {On the numerical range of a matrix},\ }\href@noop {} {\bibfield  {journal} {\bibinfo  {journal} {Math. Nach.}\ }\textbf {\bibinfo {volume} {6}},\ \bibinfo {pages} {193} (\bibinfo {year} {1951})}\BibitemShut {NoStop}%
\bibitem [{\citenamefont {Motzkin}\ and\ \citenamefont {Taussky}(1953)}]{MotzkinTaussky53}%
  \BibitemOpen
  \bibfield  {author} {\bibinfo {author} {\bibfnamefont {T.~S.}\ \bibnamefont {Motzkin}}\ and\ \bibinfo {author} {\bibfnamefont {O.}~\bibnamefont {Taussky}},\ }\bibfield  {title} {\bibinfo {title} {Pairs of matrices with property l},\ }\href@noop {} {\bibfield  {journal} {\bibinfo  {journal} {PNAS}\ }\textbf {\bibinfo {volume} {39}},\ \bibinfo {pages} {961} (\bibinfo {year} {1953})}\BibitemShut {NoStop}%
\bibitem [{\citenamefont {Albert}(1958)}]{Albert1958}%
  \BibitemOpen
  \bibfield  {author} {\bibinfo {author} {\bibfnamefont {A.~A.}\ \bibnamefont {Albert}},\ }\bibfield  {title} {\bibinfo {title} {On the orthogonal equivalence of sets of real symmetric matrices},\ }\href@noop {} {\bibfield  {journal} {\bibinfo  {journal} {J. Math. Mech.}\ }\textbf {\bibinfo {volume} {7}},\ \bibinfo {pages} {219} (\bibinfo {year} {1958})}\BibitemShut {NoStop}%
\bibitem [{\citenamefont {Shapiro}(1982{\natexlab{a}})}]{Shapiro1}%
  \BibitemOpen
  \bibfield  {author} {\bibinfo {author} {\bibfnamefont {H.}~\bibnamefont {Shapiro}},\ }\bibfield  {title} {\bibinfo {title} {On a conjecture of kippenhahn about the characteristic polynomial of a pencil generated by two hermitian matrices. i},\ }\href@noop {} {\bibfield  {journal} {\bibinfo  {journal} {Lin. Alg. Appl.}\ }\textbf {\bibinfo {volume} {43}},\ \bibinfo {pages} {201} (\bibinfo {year} {1982}{\natexlab{a}})}\BibitemShut {NoStop}%
\bibitem [{\citenamefont {Shapiro}(1982{\natexlab{b}})}]{Shapiro2}%
  \BibitemOpen
  \bibfield  {author} {\bibinfo {author} {\bibfnamefont {H.}~\bibnamefont {Shapiro}},\ }\bibfield  {title} {\bibinfo {title} {A conjecture of kippenhahn about the characteristic polynomial of a pencil generated by two hermitian matrices. i},\ }\href@noop {} {\bibfield  {journal} {\bibinfo  {journal} {Lin. Alg. Appl.}\ }\textbf {\bibinfo {volume} {45}},\ \bibinfo {pages} {97} (\bibinfo {year} {1982}{\natexlab{b}})}\BibitemShut {NoStop}%
\bibitem [{\citenamefont {Shapiro}(1982{\natexlab{c}})}]{Shapiro3}%
  \BibitemOpen
  \bibfield  {author} {\bibinfo {author} {\bibfnamefont {H.}~\bibnamefont {Shapiro}},\ }\bibfield  {title} {\bibinfo {title} {Hermitian pencils with a cubic minimal polynomial},\ }\href@noop {} {\bibfield  {journal} {\bibinfo  {journal} {Lin. Alg. Appl.}\ }\textbf {\bibinfo {volume} {48}},\ \bibinfo {pages} {81} (\bibinfo {year} {1982}{\natexlab{c}})}\BibitemShut {NoStop}%
\bibitem [{\citenamefont {Friedland}(1983)}]{Friedland1983}%
  \BibitemOpen
  \bibfield  {author} {\bibinfo {author} {\bibfnamefont {S.}~\bibnamefont {Friedland}},\ }\bibfield  {title} {\bibinfo {title} {Simultaneous similarity of matrices},\ }\href@noop {} {\bibfield  {journal} {\bibinfo  {journal} {Adv. Math}\ }\textbf {\bibinfo {volume} {50}},\ \bibinfo {pages} {189} (\bibinfo {year} {1983})}\BibitemShut {NoStop}%
\bibitem [{\citenamefont {Bruyn}(1997)}]{LeBruyn1997}%
  \BibitemOpen
  \bibfield  {author} {\bibinfo {author} {\bibfnamefont {L.~L.}\ \bibnamefont {Bruyn}},\ }\bibfield  {title} {\bibinfo {title} {Orbits of matrix tuples},\ }\href@noop {} {\bibfield  {journal} {\bibinfo  {journal} {Sem. Cong. Soc. Math. France}\ }\textbf {\bibinfo {volume} {2}},\ \bibinfo {pages} {245} (\bibinfo {year} {1997})}\BibitemShut {NoStop}%
\bibitem [{\citenamefont {Shavaraovskii}(2004)}]{Shavarovskii04}%
  \BibitemOpen
  \bibfield  {author} {\bibinfo {author} {\bibfnamefont {B.}~\bibnamefont {Shavaraovskii}},\ }\bibfield  {title} {\bibinfo {title} {On some “tame” and “wild” aspects of the problem of semiscalar equivalence of polynomial matrices},\ }\href@noop {} {\bibfield  {journal} {\bibinfo  {journal} {Matem. Zam.}\ }\textbf {\bibinfo {volume} {76}},\ \bibinfo {pages} {119} (\bibinfo {year} {2004})}\BibitemShut {NoStop}%
\bibitem [{\citenamefont {Wegner}(1994)}]{Wegner94}%
  \BibitemOpen
  \bibfield  {author} {\bibinfo {author} {\bibfnamefont {F.}~\bibnamefont {Wegner}},\ }\bibfield  {title} {\bibinfo {title} {Ann. phys},\ }\href@noop {} {\bibfield  {journal} {\bibinfo  {journal} {Flow equations for Hamiltonians}\ }\textbf {\bibinfo {volume} {506}},\ \bibinfo {pages} {77} (\bibinfo {year} {1994})}\BibitemShut {NoStop}%
\bibitem [{\citenamefont {Yanai}\ and\ \citenamefont {Chan}(2006)}]{YanaiChan06}%
  \BibitemOpen
  \bibfield  {author} {\bibinfo {author} {\bibfnamefont {T.}~\bibnamefont {Yanai}}\ and\ \bibinfo {author} {\bibfnamefont {G.~K.~L.}\ \bibnamefont {Chan}},\ }\bibfield  {title} {\bibinfo {title} {J. chem. phys.},\ }\href@noop {} {\bibfield  {journal} {\bibinfo  {journal} {Canonical transformation theory for multireference problems}\ }\textbf {\bibinfo {volume} {124}},\ \bibinfo {pages} {194106} (\bibinfo {year} {2006})}\BibitemShut {NoStop}%
\bibitem [{\citenamefont {von~zur Gathen}(2011)}]{DecomposablePolynomials}%
  \BibitemOpen
  \bibfield  {author} {\bibinfo {author} {\bibfnamefont {J.}~\bibnamefont {von~zur Gathen}},\ }\bibfield  {title} {\bibinfo {title} {Counting decomposable multivariate polynomials},\ }\href {https://doi.org/10.1007/s00200-011-0141-9} {\bibfield  {journal} {\bibinfo  {journal} {App. Alg. Eng. Commun. Comp.}\ }\textbf {\bibinfo {volume} {22}},\ \bibinfo {pages} {165} (\bibinfo {year} {2011})}\BibitemShut {NoStop}%
\bibitem [{\citenamefont {Lowdin}(1963)}]{Lowdin63}%
  \BibitemOpen
  \bibfield  {author} {\bibinfo {author} {\bibfnamefont {P.~O.}\ \bibnamefont {Lowdin}},\ }\bibfield  {title} {\bibinfo {title} {Studies in perturbation theory .1. an elementary iteration-variation procedure for solving schrodinger equation by partitioning technique},\ }\href {https://doi.org/10.1016/0022-2852(63)90151-6} {\bibfield  {journal} {\bibinfo  {journal} {J. Mol. Spec.}\ }\textbf {\bibinfo {volume} {10}},\ \bibinfo {pages} {12} (\bibinfo {year} {1963})}\BibitemShut {NoStop}%
\bibitem [{\citenamefont {Morris}(1994)}]{Morris94}%
  \BibitemOpen
  \bibfield  {author} {\bibinfo {author} {\bibfnamefont {T.~R.}\ \bibnamefont {Morris}},\ }\bibfield  {title} {\bibinfo {title} {The exact renormalization-group and approximate solutions},\ }\href {https://doi.org/10.1142/S0217751X94000972} {\bibfield  {journal} {\bibinfo  {journal} {Int. J. Mod. Phys. A}\ }\textbf {\bibinfo {volume} {9}},\ \bibinfo {pages} {2411} (\bibinfo {year} {1994})}\BibitemShut {NoStop}%
\bibitem [{\citenamefont {Hedin}(1965)}]{Hedin}%
  \BibitemOpen
  \bibfield  {author} {\bibinfo {author} {\bibfnamefont {L.}~\bibnamefont {Hedin}},\ }\bibfield  {title} {\bibinfo {title} {New method for calculating 1-particle green's function witn application to the electron gas problem},\ }\href {https://doi.org/10.1103/PhysRev.139.A796} {\bibfield  {journal} {\bibinfo  {journal} {Phys. Rev.}\ }\textbf {\bibinfo {volume} {139}},\ \bibinfo {pages} {A796+} (\bibinfo {year} {1965})}\BibitemShut {NoStop}%
\bibitem [{\citenamefont {Aryasetiawan}\ and\ \citenamefont {Gunnarsson}(1998)}]{GWReview}%
  \BibitemOpen
  \bibfield  {author} {\bibinfo {author} {\bibfnamefont {F.}~\bibnamefont {Aryasetiawan}}\ and\ \bibinfo {author} {\bibfnamefont {O.}~\bibnamefont {Gunnarsson}},\ }\bibfield  {title} {\bibinfo {title} {The gw method},\ }\href {https://doi.org/10.1088/0034-4885/61/3/002} {\bibfield  {journal} {\bibinfo  {journal} {Rep. Prog. Phys.}\ }\textbf {\bibinfo {volume} {61}},\ \bibinfo {pages} {237} (\bibinfo {year} {1998})}\BibitemShut {NoStop}%
\bibitem [{\citenamefont {Georges}\ \emph {et~al.}(1996)\citenamefont {Georges}, \citenamefont {Kotliar}, \citenamefont {Krauth},\ and\ \citenamefont {Rozenberg}}]{DMFT}%
  \BibitemOpen
  \bibfield  {author} {\bibinfo {author} {\bibfnamefont {A.}~\bibnamefont {Georges}}, \bibinfo {author} {\bibfnamefont {G.}~\bibnamefont {Kotliar}}, \bibinfo {author} {\bibfnamefont {W.}~\bibnamefont {Krauth}},\ and\ \bibinfo {author} {\bibfnamefont {M.}~\bibnamefont {Rozenberg}},\ }\bibfield  {title} {\bibinfo {title} {Dynamical mean-field theory of strongly correlated fermion systems and the limit of infinite dimensions},\ }\href {https://doi.org/10.1103/RevModPhys.68.13} {\bibfield  {journal} {\bibinfo  {journal} {Rev. Mod. Phys.}\ }\textbf {\bibinfo {volume} {68}},\ \bibinfo {pages} {13} (\bibinfo {year} {1996})}\BibitemShut {NoStop}%
\bibitem [{\citenamefont {van Schilfgaarde}\ \emph {et~al.}(2006)\citenamefont {van Schilfgaarde}, \citenamefont {Kotani},\ and\ \citenamefont {Faleev}}]{scGW}%
  \BibitemOpen
  \bibfield  {author} {\bibinfo {author} {\bibfnamefont {M.}~\bibnamefont {van Schilfgaarde}}, \bibinfo {author} {\bibfnamefont {T.}~\bibnamefont {Kotani}},\ and\ \bibinfo {author} {\bibfnamefont {S.}~\bibnamefont {Faleev}},\ }\bibfield  {title} {\bibinfo {title} {Quasiparticle self-consistent gw theory},\ }\bibfield  {journal} {\bibinfo  {journal} {Phys. Rev. Lett.}\ }\textbf {\bibinfo {volume} {96}},\ \href {https://doi.org/10.1103/PhysRevLett.96.226402} {10.1103/PhysRevLett.96.226402} (\bibinfo {year} {2006})\BibitemShut {NoStop}%
\bibitem [{\citenamefont {Yarkony}(1996)}]{Yarkony96}%
  \BibitemOpen
  \bibfield  {author} {\bibinfo {author} {\bibfnamefont {D.}~\bibnamefont {Yarkony}},\ }\bibfield  {title} {\bibinfo {title} {Diabolical conical intersections},\ }\href {https://doi.org/10.1103/RevModPhys.68.985} {\bibfield  {journal} {\bibinfo  {journal} {Rev. Mod. Phys.}\ }\textbf {\bibinfo {volume} {68}},\ \bibinfo {pages} {985} (\bibinfo {year} {1996})}\BibitemShut {NoStop}%
\bibitem [{\citenamefont {Worth}\ and\ \citenamefont {Cederbaum}(2004)}]{WorthCederbaum04}%
  \BibitemOpen
  \bibfield  {author} {\bibinfo {author} {\bibfnamefont {G.}~\bibnamefont {Worth}}\ and\ \bibinfo {author} {\bibfnamefont {L.}~\bibnamefont {Cederbaum}},\ }\bibfield  {title} {\bibinfo {title} {Beyond born-oppenheimer: Molecular dynamics through a conical intersection},\ }\href {https://doi.org/10.1146/annurev.physchem.55.091602.094335} {\bibfield  {journal} {\bibinfo  {journal} {Ann. Rev. Phys. Chem.}\ }\textbf {\bibinfo {volume} {55}},\ \bibinfo {pages} {127} (\bibinfo {year} {2004})}\BibitemShut {NoStop}%
\bibitem [{\citenamefont {Levine}\ and\ \citenamefont {Martinez}(2007)}]{LevineMartinez07}%
  \BibitemOpen
  \bibfield  {author} {\bibinfo {author} {\bibfnamefont {B.~G.}\ \bibnamefont {Levine}}\ and\ \bibinfo {author} {\bibfnamefont {T.~J.}\ \bibnamefont {Martinez}},\ }\bibfield  {title} {\bibinfo {title} {Isomerization through conical intersections},\ }\href {https://doi.org/10.1146/annurev.physchem.57.032905.104612} {\bibfield  {journal} {\bibinfo  {journal} {Ann. Rev. Phys. Chem.}\ }\bibinfo {series} {Annual Review of Physical Chemistry},\ \textbf {\bibinfo {volume} {58}},\ \bibinfo {pages} {613} (\bibinfo {year} {2007})}\BibitemShut {NoStop}%
\bibitem [{\citenamefont {Berry}\ and\ \citenamefont {Wilkinson}(1984)}]{BerryWilkinson84}%
  \BibitemOpen
  \bibfield  {author} {\bibinfo {author} {\bibfnamefont {M.~V.}\ \bibnamefont {Berry}}\ and\ \bibinfo {author} {\bibfnamefont {M.}~\bibnamefont {Wilkinson}},\ }\bibfield  {title} {\bibinfo {title} {Diabolical points in the spectra of triangles},\ }\href {https://doi.org/10.1098/rspa.1984.0022} {\bibfield  {journal} {\bibinfo  {journal} {Proc. Roy. Soc. A - Math. Phys.}\ }\textbf {\bibinfo {volume} {392}},\ \bibinfo {pages} {15} (\bibinfo {year} {1984})}\BibitemShut {NoStop}%
\bibitem [{\citenamefont {Murakami}(2007)}]{Murakami07}%
  \BibitemOpen
  \bibfield  {author} {\bibinfo {author} {\bibfnamefont {S.}~\bibnamefont {Murakami}},\ }\bibfield  {title} {\bibinfo {title} {Phase transition between the quantum spin hall and insulator phases in 3d: emergence of a topological gapless phase},\ }\bibfield  {journal} {\bibinfo  {journal} {New J. Phys.}\ }\textbf {\bibinfo {volume} {9}},\ \href {https://doi.org/10.1088/1367-2630/9/9/356} {10.1088/1367-2630/9/9/356} (\bibinfo {year} {2007})\BibitemShut {NoStop}%
\bibitem [{\citenamefont {Teller}(1937)}]{Teller37}%
  \BibitemOpen
  \bibfield  {author} {\bibinfo {author} {\bibfnamefont {E.}~\bibnamefont {Teller}},\ }\bibfield  {title} {\bibinfo {title} {The crossing of potential surfaces},\ }\href {https://doi.org/10.1021/j150379a010} {\bibfield  {journal} {\bibinfo  {journal} {J. Phys. Chem.}\ }\textbf {\bibinfo {volume} {41}},\ \bibinfo {pages} {109} (\bibinfo {year} {1937})}\BibitemShut {NoStop}%
\bibitem [{\citenamefont {Miller}\ and\ \citenamefont {George}(1972)}]{MillerGeorge72}%
  \BibitemOpen
  \bibfield  {author} {\bibinfo {author} {\bibfnamefont {W.~H.}\ \bibnamefont {Miller}}\ and\ \bibinfo {author} {\bibfnamefont {T.~F.}\ \bibnamefont {George}},\ }\bibfield  {title} {\bibinfo {title} {Semiclassical theory of electronic transitions in low energy atomic and molecular collisions involving several nuclear degrees of freedom},\ }\href {https://doi.org/10.1063/1.1677083} {\bibfield  {journal} {\bibinfo  {journal} {J. Chem. Phys.}\ }\textbf {\bibinfo {volume} {56}},\ \bibinfo {pages} {5637} (\bibinfo {year} {1972})}\BibitemShut {NoStop}%
\bibitem [{\citenamefont {Hwang}\ and\ \citenamefont {Pechukas}(1977)}]{Pechukas77}%
  \BibitemOpen
  \bibfield  {author} {\bibinfo {author} {\bibfnamefont {J.~T.}\ \bibnamefont {Hwang}}\ and\ \bibinfo {author} {\bibfnamefont {P.}~\bibnamefont {Pechukas}},\ }\bibfield  {title} {\bibinfo {title} {Adiabatic theorem in complex plane and semiclassical calculation of nonadiabatic transition amplitudes},\ }\href {https://doi.org/10.1063/1.434630} {\bibfield  {journal} {\bibinfo  {journal} {J. Chem. Phys.}\ }\textbf {\bibinfo {volume} {67}},\ \bibinfo {pages} {4640} (\bibinfo {year} {1977})}\BibitemShut {NoStop}%
\bibitem [{\citenamefont {Evangelisti}\ \emph {et~al.}(1987)\citenamefont {Evangelisti}, \citenamefont {Daudey},\ and\ \citenamefont {Malrieu}}]{EvangleistiMalrieu87}%
  \BibitemOpen
  \bibfield  {author} {\bibinfo {author} {\bibfnamefont {S.}~\bibnamefont {Evangelisti}}, \bibinfo {author} {\bibfnamefont {J.~P.}\ \bibnamefont {Daudey}},\ and\ \bibinfo {author} {\bibfnamefont {J.~P.}\ \bibnamefont {Malrieu}},\ }\bibfield  {title} {\bibinfo {title} {Qualitative intruder-state problems in effective hamiltonian theory and their solution through intermediate hamiltonians},\ }\href {https://doi.org/10.1103/PhysRevA.35.4930} {\bibfield  {journal} {\bibinfo  {journal} {Phys. Rev. A}\ }\textbf {\bibinfo {volume} {35}},\ \bibinfo {pages} {4930} (\bibinfo {year} {1987})}\BibitemShut {NoStop}%
\bibitem [{\citenamefont {Vincent}\ and\ \citenamefont {Pittel}(1973)}]{Vincent73}%
  \BibitemOpen
  \bibfield  {author} {\bibinfo {author} {\bibfnamefont {C.}~\bibnamefont {Vincent}}\ and\ \bibinfo {author} {\bibfnamefont {S.}~\bibnamefont {Pittel}},\ }\bibfield  {title} {\bibinfo {title} {Perturbative approximations to the effective hamiltonian in the presence of an intruder state},\ }\href {https://doi.org/https://doi.org/10.1016/0370-2693(73)90614-X} {\bibfield  {journal} {\bibinfo  {journal} {Phys. Lett. B}\ }\textbf {\bibinfo {volume} {47}},\ \bibinfo {pages} {327} (\bibinfo {year} {1973})}\BibitemShut {NoStop}%
\bibitem [{\citenamefont {Choe}\ \emph {et~al.}(2001)\citenamefont {Choe}, \citenamefont {Witek}, \citenamefont {Finley},\ and\ \citenamefont {Hirao}}]{ChoeHirao01}%
  \BibitemOpen
  \bibfield  {author} {\bibinfo {author} {\bibfnamefont {Y.-K.}\ \bibnamefont {Choe}}, \bibinfo {author} {\bibfnamefont {H.~A.}\ \bibnamefont {Witek}}, \bibinfo {author} {\bibfnamefont {J.~P.}\ \bibnamefont {Finley}},\ and\ \bibinfo {author} {\bibfnamefont {K.}~\bibnamefont {Hirao}},\ }\bibfield  {title} {\bibinfo {title} {Identifying and removing intruder states in multireference mo/ller–plesset perturbation theory},\ }\href {https://doi.org/10.1063/1.1345510} {\bibfield  {journal} {\bibinfo  {journal} {J. Chem. Phys.}\ }\textbf {\bibinfo {volume} {114}},\ \bibinfo {pages} {3913} (\bibinfo {year} {2001})},\ \Eprint {https://arxiv.org/abs/https://pubs.aip.org/aip/jcp/article-pdf/114/9/3913/19037610/3913\_1\_online.pdf} {https://pubs.aip.org/aip/jcp/article-pdf/114/9/3913/19037610/3913\_1\_online.pdf} \BibitemShut {NoStop}%
\bibitem [{\citenamefont {Witek}\ \emph {et~al.}(2002)\citenamefont {Witek}, \citenamefont {Choe}, \citenamefont {Finley},\ and\ \citenamefont {Hirao}}]{WitekHirao02}%
  \BibitemOpen
  \bibfield  {author} {\bibinfo {author} {\bibfnamefont {H.~A.}\ \bibnamefont {Witek}}, \bibinfo {author} {\bibfnamefont {Y.-K.}\ \bibnamefont {Choe}}, \bibinfo {author} {\bibfnamefont {J.~P.}\ \bibnamefont {Finley}},\ and\ \bibinfo {author} {\bibfnamefont {K.}~\bibnamefont {Hirao}},\ }\bibfield  {title} {\bibinfo {title} {Intruder state avoidance multireference møller–plesset perturbation theory},\ }\href {https://doi.org/https://doi.org/10.1002/jcc.10098} {\bibfield  {journal} {\bibinfo  {journal} {J. Comp. Chem.}\ }\textbf {\bibinfo {volume} {23}},\ \bibinfo {pages} {957} (\bibinfo {year} {2002})},\ \Eprint {https://arxiv.org/abs/https://onlinelibrary.wiley.com/doi/pdf/10.1002/jcc.10098} {https://onlinelibrary.wiley.com/doi/pdf/10.1002/jcc.10098} \BibitemShut {NoStop}%
\bibitem [{\citenamefont {Nikolić}\ and\ \citenamefont {Lindroth}(2004)}]{Nikolić_2004}%
  \BibitemOpen
  \bibfield  {author} {\bibinfo {author} {\bibfnamefont {D.}~\bibnamefont {Nikolić}}\ and\ \bibinfo {author} {\bibfnamefont {E.}~\bibnamefont {Lindroth}},\ }\bibfield  {title} {\bibinfo {title} {Intermediate hamiltonian to avoid intruder state problems for doubly excited states},\ }\href {https://doi.org/10.1088/0953-4075/37/13/L05} {\bibfield  {journal} {\bibinfo  {journal} {J. Phys. B}\ }\textbf {\bibinfo {volume} {37}},\ \bibinfo {pages} {L285} (\bibinfo {year} {2004})}\BibitemShut {NoStop}%
\bibitem [{\citenamefont {Chang}\ and\ \citenamefont {Witek}(2012)}]{ChangWitek12}%
  \BibitemOpen
  \bibfield  {author} {\bibinfo {author} {\bibfnamefont {S.-W.}\ \bibnamefont {Chang}}\ and\ \bibinfo {author} {\bibfnamefont {H.~A.}\ \bibnamefont {Witek}},\ }\bibfield  {title} {\bibinfo {title} {Choice of optimal shift parameter for the intruder state removal techniques in multireference perturbation theory},\ }\href {https://doi.org/10.1021/ct2006924} {\bibfield  {journal} {\bibinfo  {journal} {J. Chem. Theo. Comp.}\ }\textbf {\bibinfo {volume} {8}},\ \bibinfo {pages} {4053} (\bibinfo {year} {2012})}\BibitemShut {NoStop}%
\bibitem [{\citenamefont {Battaglia}\ \emph {et~al.}(2022)\citenamefont {Battaglia}, \citenamefont {Fransen}, \citenamefont {Galvan},\ and\ \citenamefont {Lindh}}]{BattagliaLindh22}%
  \BibitemOpen
  \bibfield  {author} {\bibinfo {author} {\bibfnamefont {S.}~\bibnamefont {Battaglia}}, \bibinfo {author} {\bibfnamefont {L.}~\bibnamefont {Fransen}}, \bibinfo {author} {\bibfnamefont {I.~F.}\ \bibnamefont {Galvan}},\ and\ \bibinfo {author} {\bibfnamefont {R.}~\bibnamefont {Lindh}},\ }\bibfield  {title} {\bibinfo {title} {Regularized caspt2: an intruder-state-free approach},\ }\href {https://doi.org/10.1021/acs.jctc.2c00368} {\bibfield  {journal} {\bibinfo  {journal} {J. Chem. Theo. Comp.}\ }\textbf {\bibinfo {volume} {18}},\ \bibinfo {pages} {4814} (\bibinfo {year} {2022})}\BibitemShut {NoStop}%
\bibitem [{\citenamefont {Monino}\ and\ \citenamefont {Loos}(2022)}]{MoninoLoos22}%
  \BibitemOpen
  \bibfield  {author} {\bibinfo {author} {\bibfnamefont {E.}~\bibnamefont {Monino}}\ and\ \bibinfo {author} {\bibfnamefont {P.-F.}\ \bibnamefont {Loos}},\ }\bibfield  {title} {\bibinfo {title} {Unphysical discontinuities, intruder states and regularization in gw methods},\ }\bibfield  {journal} {\bibinfo  {journal} {J. Chem. Phys.}\ }\textbf {\bibinfo {volume} {156}},\ \href {https://doi.org/10.1063/5.0089317} {10.1063/5.0089317} (\bibinfo {year} {2022})\BibitemShut {NoStop}%
\bibitem [{\citenamefont {Knizia}\ and\ \citenamefont {Chan}(2012)}]{KniziaChan12}%
  \BibitemOpen
  \bibfield  {author} {\bibinfo {author} {\bibfnamefont {G.}~\bibnamefont {Knizia}}\ and\ \bibinfo {author} {\bibfnamefont {G.~K.-L.}\ \bibnamefont {Chan}},\ }\bibfield  {title} {\bibinfo {title} {Density matrix embedding: A simple alternative to dynamical mean-field theory},\ }\bibfield  {journal} {\bibinfo  {journal} {Phys. Rev. Lett.}\ }\textbf {\bibinfo {volume} {109}},\ \href {https://doi.org/10.1103/PhysRevLett.109.186404} {10.1103/PhysRevLett.109.186404} (\bibinfo {year} {2012})\BibitemShut {NoStop}%
\bibitem [{\citenamefont {Welborn}\ \emph {et~al.}(2016)\citenamefont {Welborn}, \citenamefont {Tsuchimochi},\ and\ \citenamefont {Van~Voorhis}}]{Bootstrap16}%
  \BibitemOpen
  \bibfield  {author} {\bibinfo {author} {\bibfnamefont {M.}~\bibnamefont {Welborn}}, \bibinfo {author} {\bibfnamefont {T.}~\bibnamefont {Tsuchimochi}},\ and\ \bibinfo {author} {\bibfnamefont {T.}~\bibnamefont {Van~Voorhis}},\ }\bibfield  {title} {\bibinfo {title} {Bootstrap embedding: An internally consistent fragment-based method},\ }\bibfield  {journal} {\bibinfo  {journal} {J. Chem. Phys.}\ }\textbf {\bibinfo {volume} {145}},\ \href {https://doi.org/10.1063/1.4960986} {10.1063/1.4960986} (\bibinfo {year} {2016})\BibitemShut {NoStop}%
\bibitem [{\citenamefont {Lan}\ \emph {et~al.}(2015)\citenamefont {Lan}, \citenamefont {Kananenka},\ and\ \citenamefont {Zgid}}]{LanZgid15}%
  \BibitemOpen
  \bibfield  {author} {\bibinfo {author} {\bibfnamefont {T.~N.}\ \bibnamefont {Lan}}, \bibinfo {author} {\bibfnamefont {A.~A.}\ \bibnamefont {Kananenka}},\ and\ \bibinfo {author} {\bibfnamefont {D.}~\bibnamefont {Zgid}},\ }\bibfield  {title} {\bibinfo {title} {Communication: Towards ab initio self-energy embedding theory in quantum chemistry},\ }\bibfield  {journal} {\bibinfo  {journal} {J. Chem. Phys.}\ }\textbf {\bibinfo {volume} {143}},\ \href {https://doi.org/10.1063/1.4938562} {10.1063/1.4938562} (\bibinfo {year} {2015})\BibitemShut {NoStop}%
\bibitem [{\citenamefont {Orus}(2014)}]{Orus14}%
  \BibitemOpen
  \bibfield  {author} {\bibinfo {author} {\bibfnamefont {R.}~\bibnamefont {Orus}},\ }\bibfield  {title} {\bibinfo {title} {A practical introduction to tensor networks: Matrix product states and projected entangled pair states},\ }\href {https://doi.org/10.1016/j.aop.2014.06.013} {\bibfield  {journal} {\bibinfo  {journal} {ANNALS OF PHYSICS}\ }\textbf {\bibinfo {volume} {349}},\ \bibinfo {pages} {117} (\bibinfo {year} {2014})}\BibitemShut {NoStop}%
\end{thebibliography}%

\end{document}